\titleformat{\section}{\normalfont\small\bfseries\centering}{\thesection.}{0.5em}{\MakeUppercase}
\titlespacing*{\section}{0pt}{20 pt}{10 pt}
\renewcommand{\thesection}{\arabic{section}}
\definecolor{Mathematica1}{rgb}{0.368417, 0.506779, 0.709798}
\definecolor{Mathematica2}{rgb}{0.880722, 0.611041, 0.142051}
\definecolor{Mathematica3}{rgb}{0.560181, 0.691569, 0.194885}
\definecolor{darkgreen}{rgb}{0,0.4,0}
\definecolor{darkred}{rgb}{0.4,0,0}
\definecolor{darkblue}{rgb}{0,0,0.4}
\definecolor{lightblue}{rgb}{.6,.6,0.9}
\definecolor{uglybrown}{rgb}{0.8,  0.7,  0.5}
\definecolor{palatinatepurple}{rgb}{0.41, 0.16, 0.38}
\definecolor{celebrationcolor}{rgb}{0.75,  0.0,  0.9}
\definecolor{shadecolor}{rgb}{0.90,0.90,0.90}
\definecolor{DVcolor}{rgb}{0.95,  0.5,  0.2}
\definecolor{lightbluemuons}{rgb}{0.0,.65,1.0}
\definecolor{chartreuse}{rgb}{0.70, 1.00, 0.00}
\tikzset{
    vector/.style={decorate, decoration={snake}, draw},
    fermion/.style={postaction={decorate},
        decoration={markings,mark=at position .55 with {\arrow{>}}}},
    fermionbar/.style={draw, postaction={decorate},
        decoration={markings,mark=at position .55 with {\arrow{<}}}},
    fermionnoarrow/.style={},
    gluon/.style={decorate,
        decoration={coil,amplitude=4pt, segment length=5pt}},
    scalar/.style={dashed, postaction={decorate},
        decoration={markings,mark=at position .55 with {\arrow{>}}}},
    scalarbar/.style={dashed, postaction={decorate},
        decoration={markings,mark=at position .55 with {\arrow{<}}}},
    scalarnoarrow/.style={dashed,draw},
%
	vectorscalar/.style={loosely dotted,draw=black, postaction={decorate}},
}
\def\centerarc[#1](#2)(#3:#4:#5)
\newmdenv[%
        backgroundcolor=lightgray,
    linecolor=black,
    outerlinewidth=2pt,
]{boxedandshaded}
\numberwithin{equation}{section}
\renewcommand{\theequation}{\arabic{section}.\arabic{equation}}
\newlength{\extraspace}
\newlength{\extraspaces}
\def\be{\begin{equation}}
\def\ee{\end{equation}}
\newcommand{\bea}{\begin{eqnarray}}
\newcommand{\eea}{\end{eqnarray}}
\def\Tr{{{\rm Tr}}}
\def\Im{{\rm Im\hskip0.1em}}
\def\bra#1{\left\langle#1\right|}
\def\ket#1{\left|#1\right\rangle}
\def\CF{{\cal F}}
\def\II{\relax{I\kern-.10em I}}
\def\IB{\relax{\rm I\kern-.18em B}}
\def\ID{\relax{\rm I\kern-.18em D}}
\def\IE{\relax{\rm I\kern-.18em E}}
\def\IF{\relax{\rm I\kern-.18em F}}
\def\IG{\relax\hbox{$\inbar\kern-.3em{\rm G}$}}
\def\IGa{\relax\hbox{${\rm I}\kern-.18em\Gamma$}}
\def\IH{\relax{\rm I\kern-.18em H}}
\def\II{\relax{\rm I\kern-.18em I}}
\def\IK{\relax{\rm I\kern-.18em K}}
\def\inbar{\,\vrule height1.5ex width.4pt depth0pt}
\def\simgt{\hskip0.05in\relax{ 
\raise3.0pt\hbox{ $>$
{\lower5.0pt\hbox{\kern-1.05em $\sim$}} }} \hskip0.05in}
\def\lp10{\ell_p^{10}}
\def\lp11{\ell_p^{11}}
\def\R11{R_{11}}
\def\frac#1#2{{#1 \over #2}}
\newdimen\tableauside\tableauside=1.0ex
\newdimen\tableaurule\tableaurule=0.4pt
\newdimen\tableaustep
\def\phantomhrule#1{\hbox{\vbox to0pt{\hrule height\tableaurule width#1\vss}}}
\def\phantomvrule#1{\vbox{\hbox to0pt{\vrule width\tableaurule height#1\hss}}}
\def\sqr{\vbox{%
  \phantomhrule\tableaustep
  \hbox{\phantomvrule\tableaustep\kern\tableaustep\phantomvrule\tableaustep}%
  \hbox{\vbox{\phantomhrule\tableauside}\kern-\tableaurule}}}
\def\squares#1{\hbox{\count0=#1\noindent\loop\sqr
  \advance\count0 by-1 \ifnum\count0>0\repeat}}
\def\tableau#1{\vcenter{\offinterlineskip
  \tableaustep=\tableauside\advance\tableaustep by-\tableaurule
  \kern\normallineskip\hbox
    {\kern\normallineskip\vbox
      {\gettableau#1 0 }%
     \kern\normallineskip\kern\tableaurule}%
  \kern\normallineskip\kern\tableaurule}}
\def\gettableau#1 {\ifnum#1=0\let\next=\null\else
  \squares{#1}\let\next=\gettableau\fi\next}
\def\({\left(}
\def\){\right)}
\def\ii{{\bf i}}
\def\lsim{\mathrel{\mathstrut\smash{\ooalign{\raise2.5pt\hbox{$<$}\cr\lower2.5pt\hbox{$\sim$}}}}}
\def\gsim{\mathrel{\mathstrut\smash{\ooalign{\raise2.5pt\hbox{$>$}\cr\lower2.5pt\hbox{$\sim$}}}}}
\def\overleftrightarrow#1{\vbox{\ialign{##\crcr
     $\leftrightarrow$\crcr\noalign{\kern-0pt\nointerlineskip}
     $\hfil\displaystyle{#1}\hfil$\crcr}}}
     \def\overleftarrow#1{\vbox{\ialign{##\crcr
     $\leftarrow$\crcr\noalign{\kern-0pt\nointerlineskip}
     $\hfil\displaystyle{#1}\hfil$\crcr}}}
\def\gU{\textsf{U}}
\newif{\ifeq}           
\newcounter{lecturecounter}
\theoremstyle{plain}
\newtheorem{theorem}{Theorem}[section] 
\newtheorem{definition}[theorem]{Definition} 
\theoremstyle{plain}
\theoremstyle{plain}
\newtheorem{lemma}[theorem]{Lemma}
\theoremstyle{plain}
\theoremstyle{plain}
\theoremstyle{plain}
\newtheorem{Fact}[theorem]{Fact}
\definecolor{XLgreen}{RGB}{34,139,34}
\definecolor{JMblue}{RGB}{25,25,125}
\definecolor{BSorange}{RGB}{140,50,0}
\def\ins{\mathcal{I}}
\def\calF{\mathcal{F}}
\def\calH{\mathcal{H}}
\begin{document}
\title{\bfseries\Large Strict area law entanglement versus chirality}
\author{Xiang Li}
\affiliation{Department of Physics, University of California at San Diego, La Jolla, CA 92093, USA}
\author{Ting-Chun Lin}
\affiliation{Department of Physics, University of California at San Diego, La Jolla, CA 92093, USA}
\affiliation{Hon Hai Research Institute, Taipei, Taiwan}
\author{John McGreevy}
\affiliation{Department of Physics, University of California at San Diego, La Jolla, CA 92093, USA}
\author{Bowen Shi}
\affiliation{Department of Physics, University of California at San Diego, La Jolla, CA 92093, USA}
	\affiliation{Department of Computer Science, University of California, Davis, CA 95616, USA}

\begin{abstract}
Chirality is a property of a gapped phase of matter in two spatial dimensions that can be manifested through non-zero thermal or electrical Hall conductance.   
In this paper, we prove two no-go theorems that forbid 
such chirality for
a quantum state in a finite dimensional local Hilbert space with strict area law entanglement entropies. We also show that the finite dimensional local Hilbert space condition can be relaxed to the condition that state has finite local entanglement entropies.
As a crucial ingredient in the proofs, we introduce a new 
quantum information-theoretic primitive 
called \emph{instantaneous modular flow}, which has many other potential applications.

\end{abstract}
\maketitle

\section{Introduction}

A quantum phase of matter can be regarded as a family of quantum many-body systems that share the same universal properties. In this paper, we focus on 2+1D chiral gapped phases of matter, in which a system can be described by a local many-body Hamiltonian with a bulk energy gap in two spatial dimensions and have non-zero thermal or electrical Hall conductance. 

Both thermal and electrical Hall conductance characterize chiral degrees of freedom of the system and are universal properties of the phase. When a gapped quantum system is put on a two-dimensional disk, at a low temperature $T$, it is possible that there is a non-vanishing net energy current $I^E(T)$ flowing in a certain direction along the edge, making the system chiral. This is known as the thermal Hall effect. The thermal Hall conductance is of the form $dI^E(T)/dT = \pi T c_{-}/6$, where $c_{-}$ is called the chiral central charge \cite{Kane_1997,Cappelli:2001mp,Kitaev:2005hzj}.
This thermal Hall effect is robust against any local perturbations to the system. The chiral central charge, which describes the difference between the number of edge degrees of freedom propagating clockwise and counterclockwise, is universal and takes the same value for all the systems within the phase. Electrical Hall conductance describes the linear response of particles with $\gU(1)$ electric charges on a two-dimensional plane in the presence of an electric and magnetic field. 
The conductance of the induced $\gU(1)$ current $J_x$ with the electric field $E_y$ is of the form
$\sigma_{xy} = J_x/E_y = \nu/(2\pi)$, 
with $\nu$ being a rational number \cite{Laughlin_1981,Halperin_1982}. $\sigma_{xy}$ is also universal in the phase and robust against local perturbations that do not break the $\gU(1)$ symmetry. 

It is natural to ask: which systems can or cannot have a non-zero thermal or electrical quantum Hall effect? Ref.~\cite{Kapustin_2019,Zhang_2022,Kapustin_2020} answer the question in terms of Hamiltonians of the systems: A system with non-zero thermal or electrical Hall conductance cannot be described by a commuting projector Hamiltonian in a finite-dimensional Hilbert space.  
A common method of studying a gapped phase of matter is to construct a commuting projector Hamiltonian of a system in the phase, so that the system is exactly solvable. The theorems in \cite{Kapustin_2019,Zhang_2022,Kapustin_2020} show that such a method is impossible for chiral gapped phases, unless one forgoes some requirements such as finite-dimensional Hilbert space, as in \cite{DeMarco:2021erp}.  

The result presented in this paper concerns a different approach to studying phases of matter. Instead of starting with a Hamiltonian that describes a system within the phase, one can also directly work with a suitable quantum state that encodes the universal properties of the phase. Such a state is called a \emph{representative state} of the phase. There has been a growing number of works in this approach \cite{kitaev2006topological,levin2006detecting,Li-Haldane2008,shi2020fusion,knots-paper,Kim2021,Kim:2021tse,fan2022extracting,
Cian2022,Kobayashi2023,Lin2023,q-cross-ratio2024,emergence-of-virasoro2024}. In particular, chiral central charge and electric Hall conductance of a chiral gapped phase can be extracted just from a single representative state of the phase \cite{Kim:2021tse,Kim2021,fan2022extracting}. Among these works of studying phases of matter from a quantum state, there is a common feature that local entanglement properties of the state play an important role. Therefore, to study chiral gapped phases of matter, it is of great importance to explore the interplay between chirality and local entanglement properties in representative states of the phases. Motivated by this, we ask the following question: \emph{What are the general local entanglement properties a quantum state should have, so that it can have a non-zero $c_{-}$ or $\sigma_{xy}$ and hence be a representative state of a chiral gapped phase?} In this paper, we provide an answer to this question, which is summarized into two no-go theorems: 
\emph{A representative state of a chiral gapped phase cannot have strict area law entanglement entropies if the state is constructed in a tensor product Hilbert space with finite local dimensions.} 

\section{Preliminaries} 
\label{sec:setup}
The precise statements of the no-go theorems involve three aspects of a quantum many-body state $\ket\Psi$, namely (1) its setup in a tensor product Hilbert space with finite local dimension, (2) strict area law condition (more precisely, the bulk {\bf A1} condition), (3) computation of $c_{-}$ and $\sigma_{xy}$ from a representative state of a chiral gapped phase. We now introduce them in detail. 

(1) \emph{Setup}: We shall consider a two-dimensional quantum many-body state $\ket\Psi$ in a \emph{tensor product Hilbert space with finite local dimension}. We explicitly unpack these words as follows: The state $\ket\Psi$ describes a quantum many-body system on a two-dimensional lattice. The local degrees of freedom live on the lattice sites $v$ with finite dimensional Hilbert spaces $\calH_v$. The global Hilbert space for $\ket{\Psi}$ is a tensor product $\calH = \otimes_v \calH_v$\footnote{If the local degrees of freedom are fermionic, then the tensor product is $\mathbb{Z}_2$ graded.}. Throughout this paper, we shall exclusively use $\ket\Psi$ to denote a quantum state under such a setup. 

(2) \emph{Strict area law and bulk {\bf A1}}: The local entanglement properties of $\ket\Psi$ are encoded in its reduced density matrices. Consider a region $A$, which is a collection of a finite number of sites in the lattice; the Hilbert space can be factored as $\calH = \calH_A \otimes \calH_{\overline{A}}$, where $\calH_X = \otimes_{v\in X}\calH_v$ and $\overline{A}$ denotes the complement of region $A$. The reduced density matrix $\rho_A \equiv \Tr_{\overline{A}}\ket{\Psi}\bra{\Psi}$ is defined via tracing out the degrees of freedom in $\calH_{\overline{A}}$. If $\ket{\Psi}$ represents the infrared fixed-point of the renormalization group flow in a gapped phase of matter, it is expected to satisfy \emph{strict area law}. This explicitly means the entanglement entropy $S(\rho_A)\equiv -\Tr(\rho_A \ln \rho_A)$ of the reduced density matrix $\rho_A$ on a simply connected region $A$ is of the form 
\begin{equation}\label{eq:strict-area-law}
    S(\rho_A) = \alpha |\partial A| - \gamma,
\end{equation}
where $\alpha,\gamma$ are two region-independent constants and $|\partial A|$ denotes the size of the boundary of $A$. For general representative states of the phase, there can be subleading contributions in addition to Eq.~\eqref{eq:strict-area-law}, which usually decay as $|\partial A|$ increases. For different representative states of the phase, $\alpha$ might be different, while $\gamma$, known as the topological entanglement entropy \cite{kitaev2006topological,levin2006detecting}, usually takes the same value and hence is a universal property of the phase. 

\begin{figure}[!h]
    \centering
    \includegraphics[width=0.5\columnwidth]{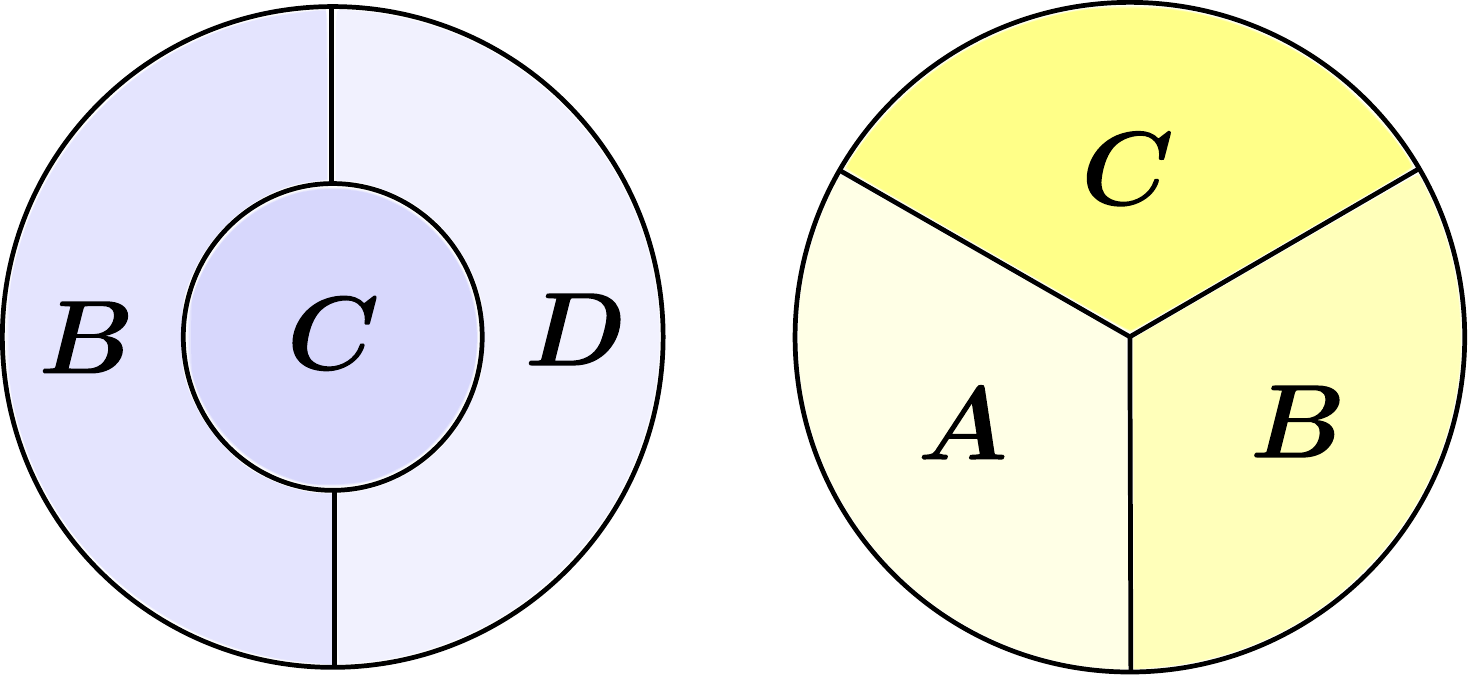}
    \caption{Tripartite regions on the lattice. (Left) Regions for bulk {\bf A1}. (Right) Regions for computing $c_{-}$ and $\sigma_{xy}$.}
    \label{fig:bulk-A1-J}
\end{figure}

In fact, the no-go theorems we shall discuss later concern quantum states that satisfy a condition called \emph{bulk {\bf A1}} introduced in \cite{shi2020fusion}, which relaxes the strict area law condition. Explicitly, we say $\ket\Psi$ satisfies the bulk {\bf A1} if for any disk-like region with partition $BCD$ topologically equivalent to Fig.~\ref{fig:bulk-A1-J} (Left), the entanglement entropies from $\ket\Psi$ on these regions satisfy  
\begin{equation}\label{eq:A1}
    (S_{BC}+S_{CD}-S_B-S_D)_{\ket\Psi} = 0,
\end{equation}
where $(S_X)_{\ket\Psi}\equiv S(\rho_X)$. With Eq.~\eqref{eq:strict-area-law} one can check that a strict area law state satisfies bulk {\bf A1}. On the other hand, {\bf A1} states include more than strict area law states, since {\bf A1} could still hold if there are additional contributions to Eq.~\eqref{eq:strict-area-law}
\footnote{In general, bulk {\bf A1} allows any additional short-range entanglement contributions in the form of sum of local terms along $\partial A$ as discussed in \cite{Grover:2011fa}. 
An explicit state satisfying {\bf A1} but not strict area law can be obtained by starting with a state with strict area law and tensoring in EPR pairs $\sqrt{p_{uv}} \ket{0_u0_v} + \sqrt{1-p_{uv}} \ket{1_u1_v}$ of varying strength $p_{uv}$ along each link $<uv>$ of the lattice. The whole state satisfies {\bf A1} still, but the area law coefficient will vary in space according to the choices of $p$.
Moreover, sharp corners of the region can make additional contributions to Eq.~\eqref{eq:strict-area-law}, which are consistent with {\bf A1}.
In fact,
we show in \cite{Corner_ent} that such a corner contribution is a necessary consequence of $c_- \neq 0$.
However, as a consequence of the no-go theorem~\ref{thm:no-go-thermal-intro}, this violation of the exact area law cannot be the only one that results from $c_- \neq 0$ --  since a corner contribution is consistent with {\bf A1}, further violation is required.}.
An important fact about a state that satisfies bulk {\bf A1} is 
\begin{Fact}\label{fact:markov}
    If $\ket{\Psi}$ satisfies bulk {\bf A1}, then for any tripartite region $ABC$ in the bulk that is topologically equivalent to the one shown in Eq.~\eqref{eq:CMI-0}, 
    \begin{equation}\label{eq:CMI-0}    
    I(A:C|B)_{\ket\Psi}=0,
     \quad\includegraphics[width=0.25\columnwidth,,valign=c]{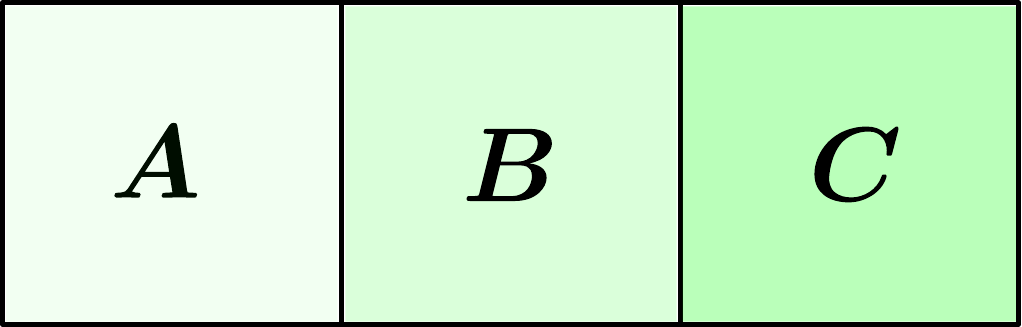}
    \end{equation}
    where $I(A:C|B)_{\ket\Psi} \equiv (S_{AB}+S_{BC}-S_B-S_{ABC})_{\ket{\Psi}}$. 
\end{Fact}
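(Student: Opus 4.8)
\emph{Proof proposal.} The plan is to read off the vanishing of $I(A:C|B)_{\ket\Psi}$ as a \emph{saturation} statement for strong subadditivity (SSA): SSA already gives $I(A:C|B)_{\ket\Psi}\ge 0$ for every tripartition, so the whole content is the reverse inequality, and bulk {\bf A1} is precisely the input that supplies it, once propagated from small reference regions out to the configuration drawn in Eq.~\eqref{eq:CMI-0} by a deformation/merging argument in the style of the entanglement bootstrap \cite{shi2020fusion}.

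The first step is to extract from bulk {\bf A1} a clean Markov condition on one reference region. Writing Eq.~\eqref{eq:A1} for the reference shape of Fig.~\ref{fig:bulk-A1-J}(Left), together with the same identity for a few slightly deformed copies (slide a thin sliver of sites from one of $B,C,D$ into a neighbor), taking differences, and invoking SSA to sign each conditional mutual information that appears — aided by the purity of $\ket\Psi$, which lets one trade $S(\rho_X)$ for $S(\rho_{\overline X})$ whenever convenient — one is forced to conclude $I(X:Z|Y)_{\ket\Psi}=0$ for a reference tripartition $XYZ$ in which $Y$ topologically separates $X$ from $Z$. This step is bookkeeping with the definitions, SSA, and Araki--Lieb.

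The second step upgrades this to Eq.~\eqref{eq:CMI-0}. The shielding region $B$ there is covered by a finite chain of overlapping reference-shaped regions lying in the bulk; bulk {\bf A1} holds on every region of that topology, so each member of the chain yields a local Markov condition, and consistency of the overlapping marginals is automatic because they are all restrictions of the single state $\ket\Psi$. Iterating the merging lemma for states with vanishing conditional mutual information (two overlapping Markov regions that agree on the overlap fuse into a larger Markov region) assembles the local conditions into $I(A:C|B)_{\ket\Psi}=0$, with a short final deformation enlarging $A$ and $C$ to their intended size once the ``tube'' $B$ has been filled in.

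I expect the second step to be the main obstacle: one must lay out the chain of reference regions so that their successive unions sweep out exactly $A$, $B$, and $C$ without ever producing a non-contractible loop, a sharp corner, or any other feature that would break topological equivalence to Fig.~\ref{fig:bulk-A1-J}(Left), and one must check at each fusion that the hypotheses of the merging lemma — above all the consistency of the conditioning region across the overlap — genuinely hold. The first step, by comparison, is routine once the combinatorial setup is fixed.
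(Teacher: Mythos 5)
The paper itself does not prove this Fact: it is imported from the entanglement--bootstrap framework of \cite{shi2020fusion}, so there is no in-paper argument to compare against. Your outline follows the same standard route (a base case extracted from {\bf A1} plus purity, then propagation by strong subadditivity), and that route does work; but both steps can be tightened considerably, and the named tool in your second step is not the right one.

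Step 1 is a one-line computation, not a bookkeeping exercise. For a disk $\Omega=BCD$ of the type in Fig.~\ref{fig:bulk-A1-J} (Left), set $A=\overline{\Omega}$ and use global purity of $\ket\Psi$, namely $S_{CD}=S_{\overline{CD}}=S_{AB}$ and $S_{D}=S_{\overline{D}}=S_{ABC}$, to obtain
\begin{equation*}
(S_{BC}+S_{CD}-S_B-S_D)_{\ket\Psi}=(S_{AB}+S_{BC}-S_B-S_{ABC})_{\ket\Psi}=I(A:C|B)_{\ket\Psi},
\end{equation*}
so Eq.~\eqref{eq:A1} already \emph{is} a vanishing conditional mutual information in disguise; no differences of deformed copies and no Araki--Lieb are needed. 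Step 2 should not invoke the state-merging lemma: that lemma constructs a \emph{new} state out of two consistent marginals with vanishing CMI, whereas here you already hold one global state and only need to transport the vanishing of \emph{its} CMI to a larger configuration. The elementary tool is monotonicity of the combination in Eq.~\eqref{eq:A1} under SSA: enlarging $D$ by a piece $D'$ changes it by $-I(C:D'|D)_{\ket\Psi}\le 0$, enlarging $B$ by $B'$ changes it by $-I(B':C|B)_{\ket\Psi}\le 0$, while weak monotonicity keeps it $\ge 0$ throughout. Since {\bf A1} is assumed for \emph{every} disk of the reference topology, you may take the reference disk to contain the target $B$ and $C$ verbatim, completed by some collar $D_0$, and then grow $D_0$ all the way to $\overline{ABC}$; the combination stays pinched at zero, and the purity identity above converts the result into Eq.~\eqref{eq:CMI-0}. (Equivalently, use the chain rule $I(A:CD|B)=I(A:C|B)+I(A:D|BC)$ together with non-negativity of each term.) Your stated worries are not where any difficulty lies: sharp corners are consistent with {\bf A1} (see the paper's own footnote), and consistency of overlapping marginals is automatic for restrictions of a single state --- the only thing to arrange is that each intermediate configuration in the growth is again of a topology covered by {\bf A1} or by a previously established CMI. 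With these substitutions your outline closes into a complete proof.
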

\noindent A state that satisfies Eq.~\eqref{eq:CMI-0} is called \emph{Markov} on $ABC$. We utilize this fact many times later in the paper.

(3) \emph{Computation of chiral central charge $c_{-}$ and electric Hall conductance $\sigma_{xy}$}:  
Previously, we have introduced chiral central charge as a property about the edge energy current. In fact, it is also a property of the bulk state. If $\ket\Psi$ is a suitable representative state of a chiral gapped phase with non-vanishing thermal Hall conductance, one can compute $c_{-}$ using bulk reduced density matrices of $\ket\Psi$ through\footnote{Similarly to the situation for topological entanglement entropy \cite{Williamson:2018zig,Kato:2019cdi}, 
there are pathological states where $J$ is not associated with a meaning of chiral central charge \cite{Gass:2024dok}. 
These examples also violate {\bf A1}. Nevertheless, $J\neq 0$ is an indication of chirality of the state regardless whether it is related to the thermal Hall conductance, and our result indicates such a chirality is not compatible with strict area law state. 
}
\begin{equation}\label{eq:J-c}
     J(A,B,C)_{\ket{\Psi}} \equiv \ii \bra{\Psi} [K_{AB},K_{BC}] \ket{\Psi} = \frac{\pi c_{-}}{3},
\end{equation}
where $\ii = \sqrt{-1}$ and $K_{\bullet} = - \ln \rho_{\bullet}$ is called modular Hamiltonian whose support indicated by the subscript is on region shown in Fig.~\ref{fig:bulk-A1-J} (Right). To compute electric Hall conductance, $\ket\Psi$ shall have an \emph{on-site $\gU(1)$ symmetry} $U(t) = \prod_v U_v(t)$, i.e. $U(t)\ket\Psi \propto \ket\Psi$. $U_v(t) = e^{\ii Q_v t}$ is generated by $Q_v$, which is supported on a lattice site $v$ and measures the amount of $\gU(1)$ charges on $v$.
Then, the local charge operator on a region $X$ is defined as $Q_X = \sum_{v\in X}Q_v$.
On such a $\gU(1)$ symmetric state $\ket\Psi$, it was proposed in \cite{fan2022extracting} that the Hall conductance can be computed via 
\begin{equation}\label{eq:Sigma-sigma}
    \Sigma(A,B,C)_{\ket\Psi} \equiv \frac{\ii }{2}\langle \Psi| [K_{AB},Q_{BC}^2] |\Psi\rangle = \sigma_{xy},
\end{equation} 
using the same tripartite region shown in Fig.~\ref{fig:bulk-A1-J} (Right). We list several properties of $J$ and $\Sigma$ in App.~\ref{app:modular-commutators}\footnote{We remark that to show $\Sigma(A,B,C)_{\ket\Psi}$ is a topological quantity, Ref.~\cite{fan2022extracting} assumed a so-called cluster property of the state. We discover that just the bulk {\bf A1} is enough. The proof of this fact is given in App.~\ref{app:sigma-property}.}. 

\section{Statement of the no-go theorems}
Now, we are in a position to present the no-go theorems, which reveal a tension among these three aspects we introduced at the beginning of Sec.~\ref{sec:setup}.  
\begin{theorem}\label{thm:no-go-thermal-intro}
    There is no quantum many-body state $\ket{\Psi}$ in a tensor product Hilbert space with finite local dimension satisfying bulk {\bf A1} with non-zero $c_{-}$ from $J(A,B,C)_{\ket\Psi}$ using Eq.~\eqref{eq:J-c}. 
\end{theorem}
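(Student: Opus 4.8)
The plan is to feed the \textbf{A1} hypothesis into the modular commutator of Eq.~\eqref{eq:J-c} only through its consequence, Fact~\ref{fact:markov}: the state is Markov on \emph{every} shielded tripartition of a bulk disk. The obstruction to using this directly is that in the tripartition that reads off $c_-$ (Fig.~\ref{fig:bulk-A1-J}, right) the regions $A,B,C$ all touch at a central point, so $B$ does not shield $A$ from $C$, $\rho_{ABC}$ is generically not Markov, and Fact~\ref{fact:markov} does not apply as stated. To get around this I would introduce the \emph{instantaneous modular flow}. Fix an \textbf{A1} state and a disk region $R$ sitting well inside the bulk, and pick a thin annular collar $\Lambda$ just inside $\partial R$, so that $(R\setminus\Lambda,\ \Lambda,\ \overline{R})$ is shielded and hence Markov. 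The structure theorem for Markov states (Hayden--Jozsa--Petz--Winter) then produces a finite block decomposition $\mathcal{H}_\Lambda=\bigoplus_j\mathcal{H}_{\ell_j}\otimes\mathcal{H}_{r_j}$ across which $\rho_R=\bigoplus_j p_j\,\rho^{(j)}_{(R\setminus\Lambda)\ell_j}\otimes\rho^{(j)}_{r_j}$, with the $r_j$ factor sitting against $\partial R$. Hence the modular unitary $\rho_R^{\ii t}\otimes I_{\overline{R}}$ acts on $\ket\Psi$ as the product of a unitary supported inside the collar $\Lambda$ (so, localized near $\partial R$) and a factor that, up to the block data, is the modular flow of the shrunken disk $R\setminus\Lambda$. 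Peeling collars inward expresses the modular flow of $R$, acting on the state, through a sequence of near-boundary unitaries --- this is the sense in which an a priori nonlocal flow becomes ``instantaneous,'' and it is the primitive on which everything rests.

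Granting this, I would embed the pie into a larger bulk disk and apply the instantaneous modular flow to both $K_{AB}$ and $K_{BC}$. Using the standard rewriting (see App.~\ref{app:mod-properties})
\[
J(A,B,C)_{\ket{\Psi}}=\partial_s\partial_t\big|_{s=t=0}\Big({-}\ii\ln\langle\Psi|\,e^{\ii K_{AB}s}e^{\ii K_{BC}t}e^{-\ii K_{AB}s}e^{-\ii K_{BC}t}\,|\Psi\rangle\Big),
\]
the two modular flows acting on $\ket\Psi$ get replaced by unitaries $U_{AB}(s)$, $U_{BC}(t)$ supported near $\partial(AB)$ and $\partial(BC)$. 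These supports can be arranged to overlap only along the outer arc adjacent to $B$ and near the central point; on each overlap piece \textbf{A1} supplies another Markov shield, with which I would further localize the interacting parts of $U_{AB}$ and $U_{BC}$ onto subsystems that are either disjoint or separated by a buffer, so that the two unitaries effectively commute on $\ket\Psi$ to the order of the double derivative. What survives then collapses to expressions of the form $\Tr(\rho[\ln\rho,\cdot])=0$, giving $J(A,B,C)_{\ket\Psi}=0$. An alternative endgame, if the commutator bounds come out cleaner, is to use the instantaneous modular flow to show $J$ is invariant under \textbf{A1}-compatible deformations of the tripartition and then degenerate the pie --- e.g.\ merging $A$ into $C$, or producing a near-center unitary that swaps their roles and invoking the antisymmetry $J(A,B,C)=-J(C,B,A)$ to force $J=-J$.

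The hard part is making the instantaneous modular flow rigorous. Modular flow is genuinely nonlocal, so proving that $\rho_R^{\ii t}\ket\Psi$ really does equal a near-boundary unitary acting on \emph{the full state} --- not merely on some reduced density matrix --- requires exploiting the exact Markov factorization with care, and it plausibly needs \textbf{A1} along an infinite nested family of collars together with a convergence estimate for the peeling. One must also check that every auxiliary region stays disk-like and inside the bulk where \textbf{A1} is imposed, obtain locality bounds on $U_{AB}$ and $U_{BC}$ uniform enough to take both derivatives, and dispose of the finite-dimensional technicalities --- invertibility of the reduced density matrices, so that $\ln\rho$ and $\rho^{\ii t}$ are unambiguous --- which can be arranged by a limiting argument. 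Finite local dimension enters exactly here: it makes the modular Hamiltonians bounded operators and the Markov decomposition a genuine finite direct sum, which is what lets the collar-peeling terminate in a controlled way and ultimately produce the contradiction with $c_-\neq 0$.
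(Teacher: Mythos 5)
There is a genuine gap, and it sits at the heart of your plan: you aim to show $J(A,B,C)_{\ket\Psi}=0$ directly from {\bf A1}, by localizing the modular flows of $AB$ and $BC$ into near-boundary unitaries and arguing that these ``effectively commute on $\ket\Psi$ to the order of the double derivative.'' That commutativity is exactly the statement being negated: $J$ is, by construction, the obstruction to the two modular flows commuting in expectation, the supports of $K_{AB}$ and $K_{BC}$ unavoidably share the entire region $B$ and the triple point where $A,B,C$ meet, and no Markov shield supplied by {\bf A1} separates them there --- $K_{BC}$ is not supported in $\overline{AB}$, so the reduction to $\Tr(\rho[\ln\rho,\cdot])=0$ never becomes available. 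Your fallback (deform and degenerate the pie, or swap $A\leftrightarrow C$ and invoke antisymmetry) fails for the same reason that one cannot shrink a region to a point to ``prove'' the topological entanglement entropy vanishes: the deformation invariance of $J$ (Fact~\ref{fact:J-def-inv}) holds only for deformations preserving the topology \emph{and orientation} of the tripartition, and both of your endgames violate that. More structurally, a proof that {\bf A1} alone forces $J=0$ cannot exist: finite local dimension must enter as the source of the contradiction, not as a technical convenience, since chiral commuting-projector models with infinite-dimensional local Hilbert spaces (which satisfy exact area law) are known.

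The paper's actual argument is different in kind. It keeps $J=\pi c_-/3\neq 0$ and instead builds the one-parameter family $\ket{\Psi(t)}=\ins_{AB}(t)\ket{\Psi}$, where the instantaneous modular flow is the nonlinear map $\ket\psi\mapsto\rho_{AB}^{\ii t}\ket\psi$ with $\rho_{AB}$ recomputed from the current state (not your collar-peeling localization, which is a different primitive). The elementary identity $\frac{d}{dt}S_{BC}(t)=J(A,B,C)_{\ket{\Psi(t)}}$ [Eq.~\eqref{eq:dSBC}], combined with Lemma~\ref{lemma:Jt} --- which uses {\bf A1} via Markov decomposition moves and deformation invariance to show $J$ is \emph{constant along the flow} --- yields $S_{BC}(t)=(S_{BC})_{\ket\Psi}+\frac{\pi c_-}{3}t$ for all $t\in\mathbb{R}$. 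This is unbounded, contradicting $S_{BC}(t)\le\ln d_{BC}$ in a finite-dimensional $\calH_{BC}$. So finite dimensionality is not a regularity hypothesis to make $\ln\rho$ well defined; it is the wall the entropy pump runs into. If you want to salvage your write-up, redirect the {\bf A1}/Markov machinery away from proving $J=0$ and toward proving the flow-invariance of $J$, which is where that machinery is genuinely needed.
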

\begin{theorem}\label{thm:no-go-electric-intro}
    There is no quantum many-body state $\ket{\Psi}$ in a tensor product Hilbert space with finite local dimension satisfying bulk {\bf A1} and having an on-site $\gU(1)$ symmetry and non-zero $\sigma_{xy}$ from $\Sigma(A,B,C)_{\ket\Psi}$ using Eq.~\eqref{eq:Sigma-sigma}. 
\end{theorem}

We remark that the condition of finite local Hilbert space dimension in both Theorems~\ref{thm:no-go-thermal-intro} and~\ref{thm:no-go-electric-intro} can be relaxed into finite local entanglement entropies. We leave the discussion of the generalized version in App.~\ref{app:general-thm}.

\section{Instantaneous modular flow}
\label{sec:IMF}
We now introduce a novel quantum information-theoretic primitive named \emph{instantaneous modular flow}, which can be defined on generic multi-partite quantum states. It is useful not only in the proofs of the no-go theorems, but also has potential applications in other quantum information theory contexts. 

\emph{Definition.} Consider a multi-partite quantum state $\ket\psi$ on a tensor product Hilbert space $\calH = \otimes_i \calH_{A_i}$, where $\calH_{A_i}$ is the Hilbert space of a party $A_i$ of finite dimension.~\footnote{We consider this setup for simplicity and it is enough for the purpose of this paper. It is possible to generalize the definition for a system in a Hilbert space without tensor product structure or finite local dimension, such as continuum quantum field theory.} 
We will call a single party or a union of parties a \emph{region}, although it may not be associated with a spatial region. 
We use $\overline{A}$ to denote the complement of the region $A$.
\begin{definition}
    An instantaneous modular flow on a region $A$, denoted as $\ins_A(t)$, is a map 
    \begin{equation}
        \ins_A(t): \quad \ket{\psi} \mapsto \ins_A(t)\ket{\psi} = \rho_A^{\ii t}\ket{\psi}, \quad t\in\mathbb{R},
    \end{equation}
    where $\rho_A = \Tr_{\overline{A}}\ket{\psi}\bra{\psi}$ is the reduced density matrix\footnote{When $\rho_A$ has zero eigenvalues, we can set $0^{\ii t} \equiv 0$, because $\rho_A^{\ii t}$ is always acting on the state from which it is defined. Using singular value decomposition, one can see this is consistent with the property $\lim_{x \to 0} x^{\ii t} \sqrt{x} = 0$. Thus, using instantaneous modular flow avoids the issues arising from such zero eigenvalues.  See App.~\ref{app:non-invertible} for further discussion of this issue.} of $\ket{\psi}$ on region $A$. We call $\ket{\psi}$ the \emph{current state} of $\ins_A(t)$. 
\end{definition}
We emphasize that the actual unitary operator that implements an instantaneous modular flow is made from the reduced density matrix from the current state.
An instantaneous modular flow should be regarded as a nonlinear map that generates a certain state evolution. 
This is the main difference between instantaneous modular flows and (ordinary) modular flows defined in the existing literature (e.g.~\cite{pjm_1102810157,Lashkari:2018oke,Sorce:2023gio,Sorce:2024exl}).
For example, to compute $\ins_B(t)\ins_A(s)\ket{\psi}$, first do a modular flow $\ket{\psi'} = \ins_A(s)\ket{\psi} = \rho_A^{\ii s}\ket{\psi}$ using $\rho_A$ from $\ket{\psi}$. Then for $\ins_{B}(t)$, use $\rho'_B$ from the current state $\ket{\psi'}$ to obtain $(\rho'_B)^{\ii t}\ket{\psi'}$ as the final result, which is in general different from $\rho_B^{\ii t}\rho_A^{\ii s}\ket{\psi}$ with both $\rho_A,\rho_B$ from $\ket{\psi}$.
We remark that the focal point of the instantaneous modular flow is to consider their sequences and the resulting state evolutions, while the ordinary modular flow is usually regarded as a unitary evolution for operators in some algebra.

{\it Moves.} By definition, in a sequence of instantaneous modular flows, every modular flow always acts on the pure state from which the reduced density matrix for the flow is obtained. Because of this feature, instantaneous modular flows have many nice properties (called {\it moves}) that allow one to manipulate them in various ways and make them a valuable tool. Below we introduce two basic yet important moves.
(1) {\it Flipping move.} The most basic property is that 
\begin{equation}
    \ins_A(t)\ket{\psi} = \ins_{\overline{A}}(t)\ket{\psi},
\end{equation}
which is due to a simple fact that $\rho_A^{\ii t}\ket{\psi} = \rho_{\overline{A}}^{\ii t}\ket{\psi}$ with $\rho_A,\rho_{\overline{A}}$ from $\ket{\psi}$. Therefore, the manipulation $\ins_B(t)\ins_A(s)\ket{\psi} = \ins_{\overline{B}}(t)\ins_A(s)\ket{\psi}$ is allowed, while if one only used reduced density matrix $\rho_{A},\rho_B$ from $\ket{\psi}$ then $\rho_{B}^{\ii t}\rho_A^{\ii s}\ket{\psi} \neq \rho_{\overline{B}}^{\ii t}\rho_A^{\ii s}\ket{\psi}$ in general. 
(2) {\it Commutation move.} If $A \subseteq B$, then 
\begin{equation}
    \ins_{A}(s)\ins_B(t)\ket{\psi} = \ins_{B}(t)\ins_A(s)\ket{\psi}. 
\end{equation}
This is because on both hand side of the equation, one can use flip move $\ins_B(t) = \ins_{\overline{B}}(t)$, so that $\overline{B}$ does not overlap with $A$ and one can swap the two flows. In contrast, such a result with ordinary modular flows would not be true in general, $\rho_{A}^{\ii s}\rho_B^{\ii t}\ket{\psi} \neq \rho_B^{\ii t}\rho_A^{\ii s}\ket{\psi}$. One can also generalize this result to sequence of $n$ instantaneous modular flows $\ins_{A_n}(t_n)\cdots \ins_{A_2}(t_2)\ins_{A_1}(t_1)\ket{\psi}$. If $A_1\subseteq A_2 \subseteq \cdots\subseteq A_n$, then one can freely reorder the instantaneous modular flows in the sequence. 

\emph{Applications.} One can apply instantaneous modular flows to various contexts. Here we discuss its application in gapped quantum phases of matter and quantum information. More applications are mentioned in Sec.~\ref{sec:Discussion}. 

One can use instantaneous modular flow to compute topological quantities in gapped quantum phases of matter. For example, one can compute the modular commutator $J(A,B,C)_{\ket\Psi}$ via 
\begin{align}\label{eq:J-F}
    J(A,B,C)_{\ket\Psi} = 2 \Im \Big[\partial_x \partial_y \calF^{A,B,C}_{J}(x,y)_{\ket\Psi}\Big]_{x,y = 0},
\end{align}
where $\calF^{A,B,C}_{J}(x,y)_{\ket\Psi} \equiv \left(\ins_{AB}(x)\ket{\Psi}, \ins_{BC}(y)\ket{\Psi}\right)$.~\footnote{Here we use $(\cdot,\cdot)$ to denote state overlap: $( \ket\varphi, \ket\psi ) \equiv \langle \varphi|\psi\rangle$.} Similarly, one can compute $\Sigma(A,B,C)_{\ket\Psi}$ via 
\begin{equation}\label{eq:Sigma-F}
    \Sigma(A,B,C)_{\ket\Psi} = - \mathrm{Re} \Big[\partial_x \partial^2_y \calF^{A,B,C}_{\Sigma}(x,y)_{\ket\Psi}\Big]_{x,y = 0}  
\end{equation}
with $\calF^{A,B,C}_{\Sigma}(x,y)_{\ket\Psi} \equiv \left( \ins_{AB}(x)\ket{\Psi},U_{BC}(y)\ket\Psi \right)$.
This way of writing $J$ or $\Sigma$ is less singular in that it avoids using modular Hamiltonians; furthermore, the more general quantity $\CF$ turns out to be interesting in itself and more tractable than $J$ or $\Sigma$.

Another possible application of instantaneous modular flow is in quantum information theory. For example, one can obtain a necessary and sufficient condition for a state $\ket{\psi}$ in finite-dimensional Hilbert space to satisfy the Markov condition on $ABC$:
\begin{lemma}[Markov decomposition move]\label{lemma:Markov-flow}
    $I(A:C|B)_{\ket\psi}=0$ if and only if 
    \begin{equation}\label{eq:Markov-sequence}
      \ins_{ABC}(t)\ket{\psi} = \ins_{AB}(t)\ins_{BC}(t)\ins_B(-t)\ket{\psi},\; \forall t\in\mathbb{R}. 
    \end{equation}
\end{lemma}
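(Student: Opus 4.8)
The plan is to connect the operator identity $I(A:C|B)_{\ket\psi}=0$ to the structural theorem of Hayden--Jozsa--Petz--Winter: vanishing conditional mutual information is equivalent to the existence of a decomposition $\calH_B = \bigoplus_j \calH_{B_j^L}\otimes\calH_{B_j^R}$ such that $\rho_{ABC} = \bigoplus_j p_j\, \rho_{AB_j^L}\otimes \rho_{B_j^R C}$. On each block, $\rho_{ABC}$, $\rho_{AB}$, $\rho_{BC}$, and $\rho_B$ are simultaneously built from the same tensor-factor pieces, so their imaginary powers multiply in a controlled way. First I would establish the ``only if'' direction: assuming $I(A:C|B)=0$, write $\rho_{ABC}^{\ii t}$ in the block-diagonal form and check that $\rho_{AB}^{\ii t}\rho_{BC}^{\ii t}\rho_B^{-\ii t}$ acts block-wise as $(\rho_{AB_j^L})^{\ii t}\otimes(\rho_{B_j^RC})^{\ii t}\otimes(p_j\,\sigma_{B_j^L}\otimes\tau_{B_j^R})^{-\ii t}$ where $\sigma,\tau$ are the marginals of the block; the $p_j^{\ii t}$ scalars and the commuting tensor pieces recombine to exactly $\rho_{ABC}^{\ii t}$ on that block. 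Since all these operators are being applied to $\ket\psi$ itself (this is the whole point of working with instantaneous flows rather than fixed operators — the flip and commutation moves from Section \ref{sec:IMF} let one safely pass from $\ins_{BC}(t)\ket\psi$, $\ins_B(-t)\ket\psi$ to the composed action), the right-hand side of Eq.~\eqref{eq:Markov-sequence} equals $\rho_{ABC}^{\ii t}\ket\psi = \ins_{ABC}(t)\ket\psi$.

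For the ``if'' direction, I would differentiate Eq.~\eqref{eq:Markov-sequence} at $t=0$. The first derivative gives $K_{ABC} = K_{AB} + K_{BC} - K_B$ as an identity when applied to $\ket\psi$ — i.e.\ $(K_{AB}+K_{BC}-K_B-K_{ABC})\ket\psi = 0$ — which is not yet the scalar statement $I(A:C|B)=0$. Taking the expectation value $\bra\psi(\cdot)\ket\psi$ of this vector identity immediately yields $S_{AB}+S_{BC}-S_B-S_{ABC}=0$, since $\bra\psi K_X\ket\psi = -\Tr(\rho_X\ln\rho_X) = S_X$. That is exactly $I(A:C|B)_{\ket\psi}=0$. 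So the ``if'' direction is actually the easy one and needs only the first-order term; the higher-order content of Eq.~\eqref{eq:Markov-sequence} is automatically implied once CMI vanishes, by the ``only if'' direction.

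The main obstacle is bookkeeping in the ``only if'' direction: one must be careful that $\rho_A^{\ii t}$ is only well-defined modulo kernel ambiguities, handled by the convention $0^{\ii t}\equiv 0$ flagged in the footnote and App.~\ref{app:non-invertible}; one must verify that on each HJPW block the three reduced density matrices genuinely factor through the same $B_j^L\otimes B_j^R$ split (this is standard but should be stated), and that the scalar weights $p_j$ — which appear in $\rho_{ABC}$, $\rho_{AB}\!\restriction$, $\rho_{BC}\!\restriction$ but are ``double counted'' and removed by $\rho_B^{-\ii t}$ — cancel correctly. A clean way to organize this is to note that within a block all four operators lie in the abelian algebra generated by $\{\rho_{AB_j^L}\otimes 1,\, 1\otimes\rho_{B_j^RC},\, p_j\}$ up to the obvious tensor embeddings, so their imaginary powers commute and the identity $\rho_{ABC}^{\ii t} = \rho_{AB}^{\ii t}\rho_B^{-\ii t}\rho_{BC}^{\ii t}$ reduces to the scalar/logarithmic identity $\ln\rho_{ABC} = \ln\rho_{AB} + \ln\rho_{BC} - \ln\rho_B$ on that block, which is manifest. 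Assembling the blocks and invoking the flip and commutation moves to justify the order of application then completes the argument.
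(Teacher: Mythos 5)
Your proposal follows essentially the same route as the paper's proof: the ``if'' direction by differentiating at $t=0$ and taking the expectation value of the resulting vector identity, and the ``only if'' direction via the Hayden--Jozsa--Petz--Winter structure theorem, with the blockwise commuting tensor factors recombining the imaginary powers and the kernel issues handled by the $0^{\ii t}\equiv 0$ convention. The only place the paper is more careful is in passing from the operator identity to the nested sequence of \emph{instantaneous} flows (it first proves $\ins_{AB}(-t)\ins_{ABC}(t)\ket\psi=\ins_{B}(-t)\ins_{BC}(t)\ket\psi$, where each composition is unambiguous because the inner flow preserves the relevant reduced density matrix, and then rearranges); your appeal to the flip and commutation moves covers the same ground.
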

We remark that in the sequence Eq.~\eqref{eq:Markov-sequence}, all the instantaneous modular flows commute. The proof of this lemma is given in App.~\ref{app:markov-flow}. We shall call the decomposition of $\ins_{ABC}(t)$ in Eq.~\eqref{eq:Markov-sequence} on a Markov state a \emph{Markov decomposition move}.

\section{Proofs of the no-go theorems}
The proof of Theorem~\ref{thm:no-go-thermal-intro} is by contradiction. If there were a state $\ket{\Psi}$ with the properties described in Theorem~\ref{thm:no-go-thermal-intro}, then based on $\ket{\Psi}$ one could construct a family of states $\{\rho_X(t)\}_{t\in\mathbb{R}}$ on a local Hilbert space $\calH_X = \otimes_{v \in X}\calH_v$ of a region $X$ (a collection of finite number of sites in the lattice), such that the entanglement entropies of these states $S(\rho_X(t))$ would be unbounded as a function of $t\in\mathbb{R}$. This contradicts the setup that $\calH_X$ has a finite dimension $d_X$, because entanglement entropies of any state within $\calH_X$ take values in range $[0,\ln d_X]$. 

The proof of Theorem~\ref{thm:no-go-electric-intro} follows a similar strategy. Roughly speaking, if there were a state $\ket{\Psi}$ with the properties described in Theorem~\ref{thm:no-go-electric-intro}, then one could increase the charge fluctuation of a local region infinitely via modular flow,
such that the charge fluctuation in the region would be infinite. This is in contradiction to the fact that $Q_v$ has a bounded spectrum for all $v$. 

Below we present the detailed proof of Theorem~\ref{thm:no-go-thermal-intro}, featured with usages of instantaneous modular flows. We leave the analogous proof of Theorem~\ref{thm:no-go-electric-intro} in App.~\ref{app:proof-no-go-electric}.

\begin{proof}[Proof of Theorem~\ref{thm:no-go-thermal-intro}]
The construction of such a one-parameter family of states is via a modular flow. Consider a tripartite region $ABC$ shown in Fig.~\ref{fig:bulk-A1-J} (Right), where each region $A,B,C$ contains sufficiently many\footnote{This is to make sure that it is possible to further divide $A,B,C$ into smaller regions.} but still a finite number of sites. Let 
\begin{equation}
    \ket{\Psi(t)} = \ins_{AB}(t)\ket{\Psi}.
\end{equation}
We wish to study the entanglement entropies $S_{BC}(t)\equiv S(\rho_{BC}(t))$ for a one-parameter family of states $\rho_{BC}(t) \equiv \Tr_{\overline{BC}}\ket{\Psi(t)}\bra{\Psi(t)}$, which are all within the same finite dimensional local Hilbert space $\calH_{BC}$. We shall show that if the original state $\ket{\Psi}$ satisfied bulk {\bf A1} and $J(A,B,C)_{\ket{\Psi}} = \frac{\pi c_{-}}{3} \neq 0$, then $S_{BC}(t) \equiv S(\rho_{BC}(t))$ would be unbounded as a function of $t\in\mathbb{R}$ and cause a contradiction.

To obtain a formula for $S_{BC}(t)$, we first compute its derivative for any $t$ and obtain a differential equation: 
\begin{equation}\label{eq:dSBC}\begin{split}
        \frac{d}{dt}S_{BC}(t)  &= \ii\bra{\Psi(t)}[K_{AB}(t),K_{BC}(t)]\ket{\Psi(t)} \\ &= J(A,B,C)_{\ket{\Psi(t)}},
\end{split}
\end{equation}
where $K_{AB}(t),K_{BC}(t)$ are modular Hamiltonians of $\rho_{AB}(t),\rho_{BC}(t)$ obtained from $\ket{\Psi(t)}$. Furthermore, as a result of Lemma~\ref{lemma:Jt} introduced below, we obtain 
\begin{equation}
    J(A,B,C)_{\ket{\Psi(t)}} = \frac{\pi c_{-}}{3},
\end{equation}
with which one can solve differential equation Eq.~\eqref{eq:dSBC}: 
\begin{equation}\label{eq:SBC-t}
    S_{BC}(t) = (S_{BC})_{\ket{\Psi}} + \frac{\pi c_{-}}{3}t,\quad \forall t\in\mathbb{R}.
\end{equation}
Therefore, $S_{BC}(t)$ is unbounded as a function of $t\in\mathbb{R}$ from both sides. This is a contradiction since $\rho_{BC}(t)$ in $\calH_{BC}$ of dimension $d_{BC}$ should have entanglement entropies within $[0,\ln d_{BC}]$.

\end{proof}

The key ingredient in the proof above is the following lemma: 
\begin{lemma}\label{lemma:Jt}
    If $\ket{\Psi}$ satisfies bulk {\bf A1}, for a sufficiently large tripartite region $ABC$ shown in Fig.~\ref{fig:bulk-A1-J} (Right)
    \begin{equation}\label{eq:Jt-c}
        J(A,B,C)_{\ket{\Psi(t)}} = J(A,B,C)_{\ket{\Psi}} = \frac{\pi c_{-}}{3},
    \end{equation}
    where $\ket{\Psi(t)} = \ins_{AB}(t)\ket{\Psi}$. 
\end{lemma}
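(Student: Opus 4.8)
The plan is to prove the slightly stronger statement that the generating function $\calF^{A,B,C}_J(x,y)_{\ket{\Psi(t)}}\equiv\big(\ins_{AB}(x)\ket{\Psi(t)},\,\ins_{BC}(y)\ket{\Psi(t)}\big)$ is independent of $t$; by Eq.~\eqref{eq:J-F} this gives $J(A,B,C)_{\ket{\Psi(t)}}=J(A,B,C)_{\ket\Psi}$ at once, and the equality with $\pi c_-/3$ in Eq.~\eqref{eq:Jt-c} is then just the representative-state hypothesis. The first step is the elementary observation that $\rho_{AB}(t)=\rho_{AB}$: the operator $\rho_{AB}^{\ii t}$ defining $\ins_{AB}(t)$ is a unitary supported on $AB$, and conjugating $\ket\Psi\bra\Psi$ by it leaves the reduced state on $AB$ unchanged. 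Consequently $\ins_{AB}(x)\ket{\Psi(t)}=\rho_{AB}^{\ii x}\rho_{AB}^{\ii t}\ket\Psi=\ins_{AB}(x+t)\ket\Psi$, so
\[
\calF^{A,B,C}_J(x,y)_{\ket{\Psi(t)}}=\bra{\Psi}\,\rho_{AB}^{-\ii(x+t)}\,\rho_{BC}(t)^{\ii y}\,\rho_{AB}^{\ii t}\,\ket{\Psi},
\]
where $\rho_{BC}(t)=\Tr_A\!\big(\rho_{AB}^{\ii t}\rho_{ABC}\rho_{AB}^{-\ii t}\big)$.

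The second step brings in bulk {\bf A1}. For a sufficiently large $ABC$ of the type in Fig.~\ref{fig:bulk-A1-J}~(Right), Fact~\ref{fact:markov} gives $I(A:C|B)_{\ket\Psi}=0$, so $\rho_{ABC}$ decomposes in the standard way for a Markov state: $\calH_B=\bigoplus_j\calH_{B^L_j}\otimes\calH_{B^R_j}$ and $\rho_{ABC}=\bigoplus_j p_j\,\rho_{AB^L_j}\otimes\rho_{B^R_j C}$. Reading off $\rho_{AB}=\bigoplus_j p_j\,\rho_{AB^L_j}\otimes\sigma_{B^R_j}$ with $\sigma_{B^R_j}=\Tr_C\rho_{B^R_j C}$, one sees that conjugating $\rho_{ABC}$ by $\rho_{AB}^{\ii t}$ affects it only through the $\sigma_{B^R_j}^{\ii t}$ factor, giving $\rho_{BC}(t)=\bigoplus_j p_j\,\tau_{B^L_j}\otimes\big(\sigma_{B^R_j}^{\ii t}\rho_{B^R_j C}\sigma_{B^R_j}^{-\ii t}\big)$ with $\tau_{B^L_j}=\Tr_A\rho_{AB^L_j}$; equivalently $\rho_{BC}(t)=V(t)\rho_{BC}V(t)^\dagger$ for the unitary $V(t)=\bigoplus_j\mathbf{1}_{B^L_j}\otimes\sigma_{B^R_j}^{\ii t}$, which is supported on $B$ and commutes with $\rho_{AB}$.

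The third step is to substitute these block-diagonal forms into $\calF^{A,B,C}_J$ and verify that every $t$-dependent factor cancels: the phases $p_j^{\pm\ii t}$ multiply to $1$; the $(\rho_{AB^L_j})^{\pm\ii t}$ commute with $\rho_{AB^L_j}$ and drop out by cyclicity of the partial trace over $AB^L_j$; and the $\sigma_{B^R_j}^{\pm\ii t}$ telescope via $\sigma^{-\ii(x+t)}\,\sigma^{\ii t}(\cdots)\sigma^{-\ii t}\,\sigma^{\ii t}=\sigma^{-\ii x}(\cdots)$. What remains is manifestly $t$-free, e.g. $\calF^{A,B,C}_J(x,y)_{\ket{\Psi(t)}}=\sum_j p_j^{\,1+\ii(y-x)}\Tr\!\big[(\rho_{AB^L_j})^{1-\ii x}\tau_{B^L_j}^{\ii y}\big]\Tr\!\big[\rho_{B^R_j C}\sigma_{B^R_j}^{-\ii x}\rho_{B^R_j C}^{\ii y}\big]$, which equals the $t=0$ value; hence $J(A,B,C)_{\ket{\Psi(t)}}=J(A,B,C)_{\ket\Psi}=\pi c_-/3$. (The same cancellations run directly on $\ii\bra{\Psi(t)}[K_{AB}(t),K_{BC}(t)]\ket{\Psi(t)}$: the Markov structure makes $[K_{AB}(t),K_{BC}(t)]$ block-diagonal, with the $B^R_j C$ part carrying a $\sigma_{B^R_j}^{\pm\ii t}$ conjugation that cancels under the trace against $\rho_{ABC}(t)$, reproducing the usual decomposition of $J$ for Markov states.)

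I expect the work here to be bookkeeping rather than conceptual. Two points need care: (i) possible zero eigenvalues of $\rho_{AB}$, $\sigma_{B^R_j}$, etc.—handled exactly as elsewhere in the paper, since every $\rho^{\ii t}$ acts only on the state defining it, so one restricts to supports and uses $0^{\ii t}\equiv 0$ (App.~\ref{app:non-invertible}); and (ii) the input that $ABC$ in Fig.~\ref{fig:bulk-A1-J}~(Right), once large enough, is of the topological type required by Fact~\ref{fact:markov}, so that $I(A:C|B)_{\ket\Psi}=0$. This Markov property is the only place bulk {\bf A1} enters, and it is doing all the work: for a generic state $J(A,B,C)$ is \emph{not} invariant under $\ins_{AB}(t)$, so the block-product structure of $\rho_{ABC}$ is essential.
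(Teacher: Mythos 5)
There is a fatal gap at the very first place bulk {\bf A1} enters: you invoke Fact~\ref{fact:markov} to conclude $I(A:C|B)_{\ket\Psi}=0$ for the tripartite region of Fig.~\ref{fig:bulk-A1-J}~(Right), but that region is not of the topological type covered by Fact~\ref{fact:markov}. In the configuration used to define $J(A,B,C)$ the regions $A$, $B$, $C$ and $\overline{ABC}$ meet at a common point (this is explicit in the paper's treatment of ``triple intersection points''), so $B$ does not shield $A$ from $C$ as required by the picture in Eq.~\eqref{eq:CMI-0}. And the condition cannot hold there: by Fact~\ref{fact:J-markov}, $I(A:C|B)_{\ket\Psi}=0$ forces $J(A,B,C)_{\ket\Psi}=0$, i.e.\ $c_-=0$, which would make both the lemma and Theorem~\ref{thm:no-go-thermal-intro} vacuous. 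Your own closing parenthesis --- ``reproducing the usual decomposition of $J$ for Markov states'' --- is the tell, since the usual result for Markov states is precisely that $J$ vanishes. The algebra in your second and third steps is internally consistent: granted the structure theorem $\rho_{ABC}=\bigoplus_j p_j\,\rho_{AB^L_j}\otimes\rho_{B^R_j C}$, the generating function is indeed $t$-independent (and its mixed derivative has vanishing imaginary part). But the hypothesis under which that decomposition exists is exactly the hypothesis under which there is nothing to prove.

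The fix cannot consist of placing the Markov condition on $A{:}C|B$ itself. The paper instead flips $\ins_{AB}(t)=\ins_{\overline{AB}}(t)$ and applies Markov decomposition moves to partitions of $\overline{AB}=CDE$, using conditions such as $I(C{:}E|D)_{\ket\Psi}=0$ and $I(D{:}C'|M)_{\ket\Psi}=0$ that \emph{are} of the shielded type guaranteed by Fact~\ref{fact:markov}. This rewrites $\ket{\Psi(t)}$ as a residual product of flows acting on $\ket{\Psi'(t)}=\ins_{MD}(t)\ket\Psi$, where $MD$ is disjoint from a shrunken region $ABC'$; the residual flows can be replaced by a common unitary that cancels in the overlap defining $\calF_J$, and $t$-independence then follows because $\ins_{MD}$ leaves $\rho_{ABC'}$ untouched (together with a separate deformation argument $C\to C'$ that again uses only shielded Markov conditions, shown to survive the flow via strong subadditivity). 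Some detour of this kind is unavoidable: as you note yourself, $J$ is not invariant under $\ins_{AB}(t)$ for a generic state, and on the corner configuration of Fig.~\ref{fig:bulk-A1-J}~(Right) the state is deliberately \emph{not} Markov on $A{:}C|B$, so the invariance has to be routed through regions where the Markov property genuinely holds.
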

{\it Proof sketch of Lemma~\ref{lemma:Jt}.} The insight behind this statement is the following two points: [{\bf P1}] With bulk {\bf A1}, one can do Markov decomposition moves for the instantaneous modular flows on $\ket{\Psi}$. [{\bf P2}] Instantaneous modular flows on $\ket{\Psi}$ preserve Markov conditions. Note that in showing Lemma~\ref{lemma:Jt} we only need a special case of {\bf P2}, which we will prove explicitly. The proof of the general case is deferred to \cite{IMF-bulk}.

\begin{figure}[!h]
    \centering
    \includegraphics[width=0.4\linewidth]{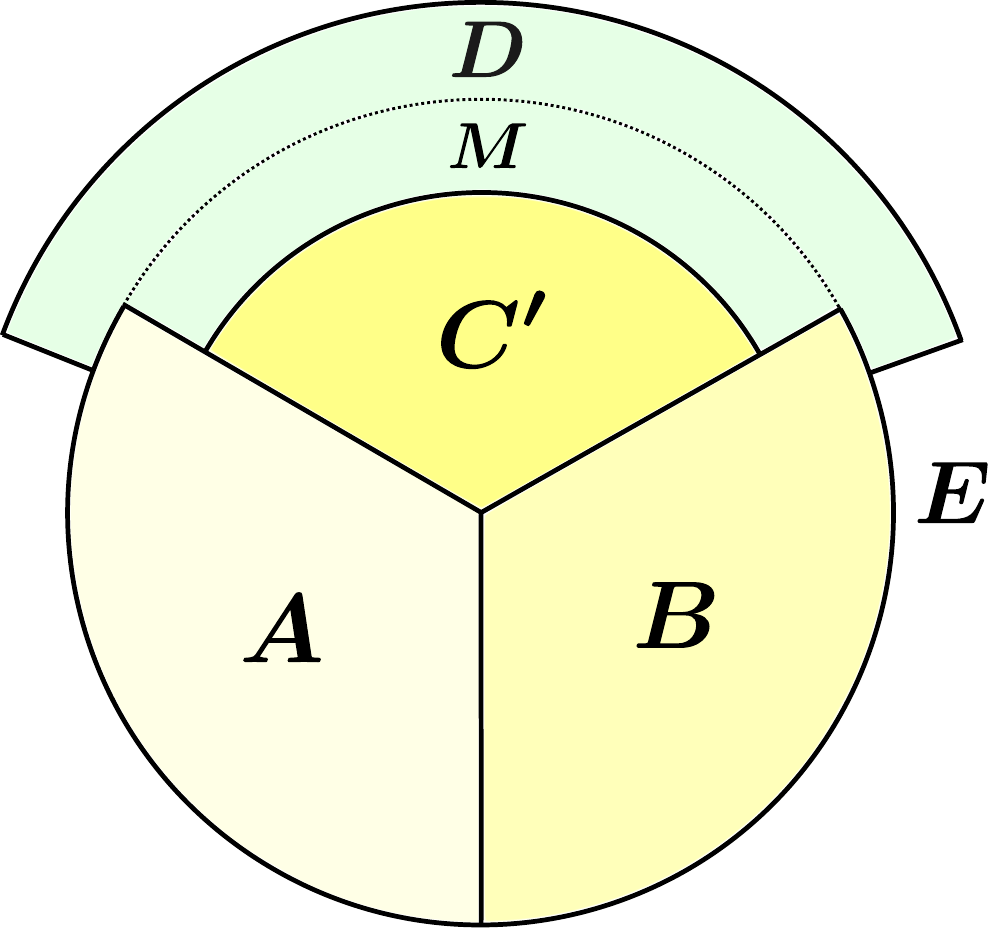}
    \caption{Regions for $J(A,B,C')_{\ket{\Psi'(t)}}$. $C = MC'$. Note that $E = \overline{ABCD}$ does not touch $C$.}
    \label{fig:deform}
\end{figure}
We now sketch the proof, whose details are in App.~\ref{app:lemma-Jt}. Consider a region $ABC$ shown in Fig.~\ref{fig:bulk-A1-J} (Right); the main task below is to show that 
\begin{equation}\label{eq:goal}
    J(A,B,C)_{\ket{\Psi(t)}} = J(A,B,C')_{\ket{\Psi'(t)}},
\end{equation}
where $\ket{\Psi'(t)} = \ins_{MD}(t)\ket{\Psi}$ and the regions $C',M,D$ are shown in Fig.~\ref{fig:deform}. Notice that the modular commutator on the right-hand side of Eq.~\eqref{eq:goal} only uses the reduced density matrix on $ABC'$, which is not changed by $\ins_{MD}$ [Fig.~\ref{fig:deform}]. With Eq.~\eqref{eq:goal}, we can conclude 
\begin{equation}
    J(A,B,C')_{\ket{\Psi'(t)}} = J(A,B,C')_{\ket{\Psi}} = \frac{\pi c_{-}}{3},
\end{equation}
and obtain Eq.~\eqref{eq:Jt-c}. 

The derivation of Eq.~\eqref{eq:goal} can be divided into two steps, which correspond to the two equalities:
\begin{equation}
\begin{split}
    J(A,B,C)_{\ket{\Psi(t)}}
    &\stackrel{\text{step 1}}{=}J(A,B,C)_{\ket{\Psi'(t)}} \\ 
    &\stackrel{\text{step 2}}{=}J(A,B,C')_{\ket{\Psi'(t)}}.
\end{split}
\end{equation}
{\it Step 1}: 
The goal of this step, which uses {\bf P1} above, is to show 
\begin{equation}\label{eq:F=F'}
    \calF_J^{A,B,C}(x,y)_{\ket{\Psi(t)}} = \calF_J^{A,B,C}(x,y)_{\ket{\Psi'(t)}},
\end{equation}
so that with Eq.~\eqref{eq:J-F}, one can obtain 
\begin{equation}
    J(A,B,C)_{\ket{\Psi(t)}} = J(A,B,C)_{\ket{\Psi'(t)}}. 
\end{equation}
In showing Eq.~\eqref{eq:F=F'}, we want to relate $\ins_{AB}(x)\ket{\Psi(t)}$ and $\ins_{BC}(y)\ket{\Psi(t)}$ to $\ins_{AB}(x)\ket{\Psi'(t)}$ and $\ins_{BC}(y)\ket{\Psi'(t)}$ by a unitary. Firstly, we can use the flipping move $\ket{\Psi(t)} = \ins_{AB}(t)\ket{\Psi} = \ins_{\overline{AB}}(t)\ket{\Psi}$. Secondly, with $\overline{AB}$ divided into small pieces as shown in Fig.~\ref{fig:deform}, one can use Markov decomposition moves to write $\ins_{\overline{AB}}(t)$ into a sequence of instantaneous modular flows supported on these small pieces.
This enables us to use the commutation move to pull $\ins_{AB}(x)$ and $\ins_{BC}(y)$ through some of the instantaneous modular flows in the sequence, obtaining 
\begin{equation}\label{eq:I-Vt}
    \begin{split}
        &\ins_{AB}(x)\ket{\Psi(t)} = \cdots \ins_{AB}(x)\ket{\Psi'(t)} = V(t)\ins_{AB}(x)\ket{\Psi'(t)} \\ 
        &\ins_{BC}(y)\ket{\Psi(t)} = \cdots \ins_{AB}(y)\ket{\Psi'(t)} = V(t)\ins_{BC}(y)\ket{\Psi'(t)}. 
    \end{split}
\end{equation}
$\cdots$ above denotes another sequence of instantaneous modular flows. One can further examine this sequence and show that its action on both $\ins_{AB}(x)\ket{\Psi'(t)}$ and $\ins_{BC}(y)\ket{\Psi'(t)}$ can be realized by the same unitary $V(t)$.  
Eq.~\eqref{eq:I-Vt} implies Eq.~\eqref{eq:F=F'} since the unitary $V(t)$ is canceled in the state overlap. 

{\it Step 2}: The second step is based on the insight {\bf P2}. The goal is to show 
\begin{equation}\label{eq:J=J'}
    J(A,B,C)_{\ket{\Psi'(t)}} = J(A,B,C')_{\ket{\Psi'(t)}}.  
\end{equation}
The deformation of the support of the modular commutator utilizes the Markov conditions in the same way as in \cite{Kim:2021tse,Kim2021}. Although the deformation in \cite{Kim:2021tse,Kim2021} is on the original state $\ket{\Psi}$, one can show that on the state $\ket{\Psi'(t)} = \ins_{MD}(t)\ket{\Psi}$, the Markov conditions required for the deformation in Eq.~\eqref{eq:J=J'} are preserved under $\ins_{MD}(t)$. 

In summary, combining Eq.~\eqref{eq:F=F'} and Eq.~\eqref{eq:J=J'}, we can conclude Eq.~\eqref{eq:goal} and finish the proof.  

\emph{Remarks:} 
Firstly, both no-go theorems are locally applicable. Even just on a local region\footnote{with sufficiently large, but still finite size} $\ket{\Psi}$ can not satisfy strict area law with $c_{-}\neq 0$ or $\sigma_{xy}\neq 0$. This is because the proofs only involve local entanglement properties of the quantum state: the strict area law is expressed with a local condition bulk {\bf A1}, and $c_{-},\sigma_{xy}$ can be computed also locally from the state. Secondly, in the setup of $\ket{\Psi}$, it is the finite local Hilbert dimension that contradicts to bulk {\bf A1} and $c_{-}\neq 0$ or $\sigma_{xy}\neq 0$, while the tensor product structure can be relaxed. For example, as discussed in App.~\ref{app:markov-flow}, the no-go theorems are applicable for fermionic systems, where the tensor product is $\mathbb{Z}_2$ graded.

\section{Discussion}
\label{sec:Discussion}
Our no-go theorems reveal that chirality, in the form of $c_{-}\neq 0$ or $\sigma_{xy}\neq 0$, constrains the local entanglement properties of a representative state of a chiral gapped phase\footnote{We comment that one might hope to combine the recent result of \cite{kim2024strict}, which says that any state satisfying {\bf A0} and {\bf A1} has a commuting projector parent Hamiltonian, with the Hamiltonian no-go theorems of \cite{Kapustin_2019,Zhang_2022,Kapustin_2020}, to provide a proof of a related result.}. 
Loosely speaking, our no-go theorems indicate that it is impossible to construct a state in a Hilbert space with finite local dimensions (or alternatively, with finite local entanglement entropies) that represent a zero correlation length fixed point of a chiral gapped phase.\footnote{This also aligns with various difficulties when people attempt to construct a chiral gapped state, such as using PEPS \cite{Hasik:2022,Wahl:2013rha} or MERA \cite{Chu:2023ven}. We leave the detailed connections to be investigated in the future.}
Moreover, the no-go theorems and their proofs might give a hint on studying the subleading correction terms to the area law entanglement formula Eq.~\eqref{eq:strict-area-law}. For a representative of a chiral gapped phase in a finite-dimensional local Hilbert space, we expect that the bulk {\bf A1} is approximately satisfied with violations decreasing as the subsystem sizes are enlarged, consistent with the decay of the subleading corrections to Eq.~\eqref{eq:strict-area-law}. On such a state, it is expected that $J$ and $\Sigma$ compute $c_{-}$ and $\sigma_{xy}$ also with decreasing error for larger subsystem sizes. One example of such a case is studied in the numerical section of \cite{emergence-of-virasoro2024}. In fact, the techniques used in the proofs might be used to relate the violation of the Markov flow equation Eq.~\eqref{eq:Markov-sequence}, which is a signature of the violation of bulk {\bf A1}, to the deviations of $J$ and $\Sigma$ from the expected values. The detailed relation is left for future investigation. 

At the end, we comment that the usage of the instantaneous modular flow in this paper is only the tip of an iceberg. One can apply it to study more properties of representative states of phases of matter, as well as critical systems such as conformal field theories (CFTs). For gapped phases of matter, one can, in fact, derive several flow equations, which are similar to Eq.~\eqref{eq:Markov-sequence} but contain universal quantities of the phase such as topological entanglement entropy and chiral central charge. 
For CFT, the instantaneous modular flows have a natural geometric picture, which further enables us to directly study the conformal group action. 
More detailed discussions of these applications will be in the upcoming work.

\begin{acknowledgments}
We are grateful to Isaac Kim for very useful discussions. 
This work was supported in part by
funds provided by the U.S. Department of Energy
(D.O.E.) under cooperative research agreement 
DE-SC0009919, 
and by the Simons Collaboration on Ultra-Quantum Matter, which is a grant from the Simons Foundation (652264, JM).
JM received travel reimbursement from the Simons Foundation;
the terms of this arrangement have been reviewed and approved by the University of California, San Diego in accordance with its conflict of interest policies.

\end{acknowledgments}

\appendix 
\clearpage
\renewcommand{\theequation}
{\Alph{section}.\arabic{equation}}

\section{Remarks on non-invertible density matrices}
\label{app:non-invertible}
In this section, we discuss instantaneous modular flows in circumstances where the reduced density matrices are not invertible. 

A reduced density matrix $\rho_X$ from a state $\ket{\psi}$ is not invertible when it contains zero eigenvalues. In the main text, we define the instantaneous modular flow $\ins_X(t)$ as 
\begin{equation}
    \ins_X(t)\ket{\psi} = \rho_X^{\ii t}\ket{\psi}. 
\end{equation}
When $\rho_X$ is not invertible, $\rho_X^{\ii t}$ should be understood via its spectral decomposition. Since $\rho_X$ is Hermitian, one can write it as 
\begin{equation}\label{eq:rho_X-decomp}
    \rho_X = \sum_a p_a \ket{a}\bra{a},
\end{equation}
with $p_a >0$. We then define 
\begin{equation}
    \rho_X^{\ii t}\equiv \sum_a p_a^{\ii t} \ket{a}\bra{a}. 
\end{equation}
Notice since $\rho_X$ is non-invertible, the eigenvectors $\ket{a}$ in Eq.~\eqref{eq:rho_X-decomp} do not span the whole Hilbert space $\calH_X$ and $\rho_X^{\ii t}$ is not a unitary operator on $\calH_X$. More explicitly, 
\begin{equation}\label{eq:rho-rho-inv}
    \rho_X^{\ii t}\rho_X^{-\ii t} = \rho_X^{-\ii t}\rho_X^{\ii t}=\Pi(\rho_X),
\end{equation}
where $\Pi(\rho_X)$ is a projector onto the support of $\rho_X$\footnote{$\rho_X^{-\ii t}$ is called a generalized inverse of $\rho_X^{\ii t}$, and vice versa \cite{penrose1955generalized,rao1972generalized}.}. 

In fact, $\rho_X^{\ii t}$ being non-unitary will not cause any trouble in the manipulations of instantaneous modular flows, because $\rho_X^{\ii t}$ always acts on its global state $\ket\psi$. Via singular value decomposition $\ket{\psi} = \sum_a \sqrt{p_a} \ket{a}_X \otimes \ket{a}_{\overline{X}}$ one can see
\begin{equation}\label{eq:Pi-action-1}
    \Pi(\rho_X) \ket{\psi} = \ket{\psi}. 
\end{equation}

More generally, for any state $\rho_Y$ with $X\subseteq Y$ such that $\rho_X = \Tr_{Y\setminus X}\rho_Y$, 
\begin{equation}\label{eq:Pi-action-2}
    \Pi(\rho_X) \rho_Y = \rho_Y.
\end{equation}
To see this, one can first purify $\rho_Y$ to $\ket{\varphi}$. Since $\ket{\varphi}$ gives reduced density matrix $\rho_X$, one can do singular value decomposition $\ket{\varphi} = \sum_a \sqrt{p_a} \ket{a}_X \otimes \ket{a}_{\overline{X}}$ and hence 
$\Pi(\rho_X) \ket{\varphi} = \ket{\varphi}$. Then with $\Pi(\rho_X) \ket{\varphi}\bra{\varphi} = \ket{\varphi}\bra{\varphi}$, tracing out the complement of $Y$, we obtain Eq.~\eqref{eq:Pi-action-2}. Similarly, one can also conclude 
\begin{equation}
    \rho_Y  \Pi(\rho_X)= \rho_Y. 
\end{equation}

Furthermore, even though $\rho_{X}^{\ii t}$ is not a unitary operator for non-invertible $\rho_X$, its action on the global state $\ket{\psi}$ can be realized by a unitary operator. One can always find a unitary $U_X(t)$ supported on $X$ such that 
\begin{equation}
    \ins_X(t)\ket{\psi} = \rho_X^{\ii t}\ket{\psi} = U_X(t)\ket{\psi}. 
\end{equation}
There are many choices of such unitaries. One simple choice is $\Tilde{\rho}_X^{\ii t}$ with 
\begin{equation}
    \Tilde{\rho}_X \equiv \sum_{a,p_a>0}p_a \ket{a}\bra{a} + \sum_{\alpha} \ket{\alpha}\bra{\alpha},
\end{equation}
where $\{\ket{\alpha}\}_{\alpha}$ together with $\{\ket{a}\}_a$ forms a complete set of basis vectors for $\calH_X$. One can easily verify that 
\begin{equation}
    \rho_X^{\ii t}\rho_Y = \Tilde{\rho}_X^{\ii t}\rho_Y,\quad \rho_Y\rho_X^{\ii t} = \rho_Y\Tilde{\rho}_X^{\ii t}.
\end{equation}
We call $\Tilde{\rho}^{\ii t}$ a {\it unitary extension} of $\rho_X^{\ii t}$. 

Later in the proofs we will remark on the non-invertible reduced density matrices in more details. 

\section{Markov decomposition move} 
\label{app:markov-flow}
In this appendix, we prove Lemma~\ref{lemma:Markov-flow}. 

\begin{proof}
To derive $I(A:C|B)_{\ket{\Psi}} = 0$ from the flow equation Eq.~\eqref{eq:Markov-sequence}, one can simply take the first derivative with $t$ at $t = 0$ on both hand side of Eq.~\eqref{eq:Markov-sequence}. The result is 
\begin{equation}\label{eq:K-markov-decomp-app}
    (K_{AB}+K_{BC}-K_{ABC} - K_B)\ket{\Psi} = 0,
\end{equation}
where the modular Hamiltonians are from $\ket\Psi$. Therefore, acting $\bra{\Psi}$ with it, since $\langle \Psi|K_X|\Psi\rangle = (S_X)_{\ket{\Psi}}$, one can obtain $I(A:C|B)_{\ket{\Psi}} = 0$. 

To derive the flow equation from $I(A:C|B)_{\ket{\Psi}} = 0$, we shall use the structure theorem of Markov states \cite{Hayden:2004wwj}: If $I(A:C|B)_{\ket{\Psi}} = 0$ on a tripartite region $ABC$, then there exists a decomposition of $\calH_B = \oplus_i \calH_{b_L^i}\otimes \calH_{b_R^i}$, such that $\rho_{ABC}$ from $\ket\Psi$ can be written as 
\begin{equation}\label{eq:rho-ABC-dec}
    \rho_{ABC} = \bigoplus_i p_i \rho_{Ab_L^i} \otimes \rho_{b_R^i C}. 
\end{equation}
With Eq.~\eqref{eq:rho-ABC-dec}, one can also write $\rho_{AB},\rho_{BC},\rho_B$ from $\ket\Psi$ as 
\begin{align}
    \rho_{AB} &= \bigoplus_i p_i \rho_{Ab_L^i} \otimes \rho_{b_R^i} \\ 
    \rho_{BC} &= \bigoplus_i p_i \rho_{b_L^i} \otimes \rho_{b_R^iC} \\ 
    \rho_{B} &= \bigoplus_i p_i \rho_{b_L^i} \otimes \rho_{b_R^i}. 
\end{align}

Notice in the decompositions above, we do not assume any of the reduced density matrices are invertible. We first show 
\begin{equation}\label{eq:markov-flow-rho}
    \rho_{AB}^{-\ii t}\rho_{ABC}^{\ii t} \cdot \rho_{ABC} = \rho_{B}^{-\ii t}\rho_{BC}^{\ii t}\cdot \rho_{ABC},\quad \forall t \in \mathbb{R}. 
\end{equation}
Utilizing the structure decompositions of $\rho_{ABC},\rho_{AB}$, we can obtain the left hand side of Eq.~\eqref{eq:markov-flow-rho}
\begin{equation}\label{eq:rhoAB-rhoABC}
    \begin{split}
        &\rho_{AB}^{-\ii t}\rho_{ABC}^{\ii t} \cdot \rho_{ABC} \\ 
        =&\bigoplus_i p_i \rho_{Ab_L^i}^{-\ii t} \rho_{Ab_L^i}^{\ii t}  \rho_{Ab_L^i} \otimes \rho_{b_R^i}^{-\ii t} \rho_{b_R^iC}^{\ii t}\rho_{b_R^iC} \\ 
        =&\bigoplus_i p_i  \rho_{Ab_L^i} \otimes \rho_{b_R^i}^{-\ii t} \rho_{b_R^iC}^{\ii t}\rho_{b_R^iC}, 
    \end{split}
\end{equation}
where to obtain the last line, we used the result $\rho_{Ab_L^i}^{-\ii t} \rho_{Ab_L^i}^{\ii t} \rho_{Ab_L^i} = \Pi(\rho_{Ab_L^i})\rho_{Ab_L^i} = \rho_{Ab_L^i}$. Similarly, one can also compute the right hand side of Eq.~\eqref{eq:markov-flow-rho} with the structure decompositions of $\rho_{ABC},\rho_{BC},\rho_B$: 
\begin{equation}\label{eq:rhoB-rhoBC}
    \begin{split}
        &\rho_{B}^{-\ii t}\rho_{BC}^{\ii t} \cdot \rho_{ABC} \\ 
        =&\bigoplus_i p_i \rho_{b_L^i}^{-\ii t} \rho_{b_L^i}^{\ii t}  \rho_{Ab_L^i} \otimes \rho_{b_R^i}^{-\ii t} \rho_{b_R^iC}^{\ii t}\rho_{b_R^iC} \\ 
        =&\bigoplus_i p_i  \rho_{Ab_L^i} \otimes \rho_{b_R^i}^{-\ii t} \rho_{b_R^iC}^{\ii t}\rho_{b_R^iC},
    \end{split}
\end{equation}
where the projector from $\rho_{b_L^i}^{-\ii t} \rho_{b_L^i}^{\ii t}$ was absorbed by $\rho_{Ab_L^i}$. Comparing the results in Eq.~\eqref{eq:rhoAB-rhoABC} and Eq.~\eqref{eq:rhoB-rhoBC}, we obtain Eq.~\eqref{eq:markov-flow-rho}, with which we can further conclude 
\begin{equation}\label{eq:markov-flow-psi}
    \rho_{AB}^{-\ii t}\rho_{ABC}^{\ii t}\ket{\Psi} = \rho_{B}^{-\ii t}\rho_{BC}^{\ii t}\ket{\Psi},
\end{equation}

Notice the left hand side of Eq.~\eqref{eq:markov-flow-psi} is $\ins_{AB}(-t)\ins_{ABC}(t)\ket{\Psi}$ and the right hand side is $\ins_{B}(t)\ins_{BC}(-t)\ket{\Psi}$, we can obtain 
\begin{equation}\label{eq:markov-app}
    \ins_{AB}(-t)\ins_{ABC}(t)\ket{\Psi} = \ins_{B}(-t)\ins_{BC}(t)\ket{\Psi},\quad \forall t\in\mathbb{R}. 
\end{equation}
Instantaneous modular flows have a property $\ins_X(-t)\ins_{X}(t)\ket{\psi} = \ket\psi$ for any $\ket\psi$ and $X$. Therefore, one can move the instantaneous modular flows on one side (say $\ins_{AB}(-t)$) of the equation \eqref{eq:markov-app} to the other side, and obtain 
\begin{equation}\label{eq:markov-app-2}
    \ins_{ABC}(t)\ket{\Psi}=\ins_{AB}(t)\ins_{BC}(t)\ins_{B}(-t)\ket{\Psi},\quad \forall t\in\mathbb{R}. 
\end{equation}

Moreover, one can see $\ins_{AB},\ins_{BC}$ also commute with each other. 
From Eq.~\eqref{eq:markov-app-2} by moving $\ins_B$ to the left-hand side, one obtain
\begin{equation}\label{eq:AB-BC}
    \ins_{ABC}(t)\ins_{B}(t)\ket{\psi} = \ins_{AB}(t)\ins_{BC}(t)\ket{\psi},\quad \forall t\in\mathbb{R}
\end{equation}
Then, by changing $t\to -t$ in Eq.~\eqref{eq:AB-BC} and move $\ins_{AB}(-t)\ins_{BC}(-t)$ to the left-hand side and $\ins_{ABC}(-t)\ins_B(-t)$ to the right-hand side, we obtain 
\begin{equation}
     \ins_{BC}(t)\ins_{AB}(t)\ket{\psi}  =  \ins_{ABC}(t)\ins_{B}(t)\ket{\psi},\quad \forall t\in\mathbb{R}.
\end{equation}
Together with Eq.~\eqref{eq:AB-BC}, we obtain
\begin{equation}
    \ins_{BC}(t)\ins_{AB}(t)\ket{\psi} =\ins_{AB}(t)\ins_{BC}(t)\ket{\psi},\quad \forall t\in\mathbb{R}.
\end{equation}

\end{proof}

{\it Remarks:} (1) The proof above is based on the structure theorem of Markov state, which is known to be applicable if the Hilbert space $\calH_{ABC}$ for $\rho_{ABC}$ is $\calH_{ABC} = \calH_A \otimes \calH_B \otimes \calH_C$ and has a finite dimension. We remark that the result (Theorem 2) in Ref.~\cite{Petz:2002eql} can also give a proof for Eq.~\eqref{eq:markov-flow-rho}. The proof in Ref.~\cite{Petz:2002eql} does not assume the density matrices are on a tensor product Hilbert space, and therefore we expect Lemma~\ref{lemma:Markov-flow} is also applicable in these circumstances. For example, we can consider a many-body system on a $\mathbb{Z}_2$ graded Hilbert space for fermionic degrees of freedom. With a proper definition of fermionic partial trace, such as in \cite{Vidal:2021xmv,Kim2021}, one can use Theorem 2 in \cite{Petz:2002eql} to conclude Eq.~\eqref{eq:markov-flow-rho} for such a Hilbert space. In more detail, Ref.~\cite{Kim2021} gives an explicit definition of fermionic partial trace in terms of Kraus representation, which is a completely-positive trace-preserving map as shown in Ref.~\cite{Vidal:2021xmv}. Then one can apply Theorem 2 in Ref.~\cite{Petz:2002eql} with such a map, to conclude Eq.~\eqref{eq:markov-flow-rho}. (2) We also remark that Ref.~\cite{Petz:2002eql} only considered the case when the reduced density matrices are invertible. We believe that one can modify the proof in \cite{Petz:2002eql} with a similar treatment of the non-invertible reduced density matrices given in App.~\ref{app:non-invertible}, and hence Eq.~\eqref{eq:markov-flow-psi} would still be valid. We leave this issue for future study.

\section{Properties of $J(A,B,C)_{\ket\Psi}$ and $\Sigma(A,B,C)_{\ket\Psi}$.}
\label{app:modular-commutators}
\subsection{Properties of modular commutators}
\label{app:mod-properties}
In this subsection, we summarize two important properties of modular commutators on $\ket{\Psi}$ in the setup introduced in the main text. The detailed proof of these properties is given in \cite{Kim:2021tse,Kim2021}. 

The first one is for generic tripartite regions $ABC$. 
\begin{Fact}\label{fact:J-markov}
    If $\ket{\Psi}$ satisfies $I(A:C|B)_{\ket\Psi}=0$, then 
    \begin{equation}
        J(A,B,C)_{\ket\Psi} = 0. 
    \end{equation}
\end{Fact}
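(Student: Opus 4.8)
The plan is to collapse $J(A,B,C)_{\ket\Psi}=\ii\bra{\Psi}[K_{AB},K_{BC}]\ket{\Psi}$ onto commutator expectation values that vanish for trivial reasons, exploiting the operator-level content of the Markov condition. First I would record the modular-Hamiltonian identity that $I(A:C|B)_{\ket\Psi}=0$ forces on $\ket\Psi$. By Lemma~\ref{lemma:Markov-flow} this hypothesis is equivalent to the Markov decomposition move, and differentiating that move at $t=0$ yields, exactly as in Eq.~\eqref{eq:K-markov-decomp-app},
\begin{equation*}
  \bigl(K_{AB}+K_{BC}-K_{ABC}-K_{B}\bigr)\ket{\Psi}=0,
\end{equation*}
with all modular Hamiltonians obtained from $\ket\Psi$ (possible zero eigenvalues being handled as in App.~\ref{app:non-invertible}); since each $K_{X}$ is Hermitian, the adjoint statement $\bra{\Psi}\bigl(K_{AB}+K_{BC}-K_{ABC}-K_{B}\bigr)=0$ holds as well.

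Next I would use this identity to eliminate $K_{BC}$ from $J$: substitute $K_{BC}\ket\Psi=(K_{ABC}+K_{B}-K_{AB})\ket\Psi$ in the term $\bra{\Psi}K_{AB}K_{BC}\ket{\Psi}$ and the adjoint identity in $\bra{\Psi}K_{BC}K_{AB}\ket{\Psi}$. The $K_{AB}^{2}$ contributions cancel, leaving
\begin{equation*}
  \bra{\Psi}[K_{AB},K_{BC}]\ket{\Psi}
  =\bra{\Psi}[K_{AB},K_{ABC}]\ket{\Psi}+\bra{\Psi}[K_{AB},K_{B}]\ket{\Psi}.
\end{equation*}
In each term the two operators share a common support region ($ABC$ and $AB$, respectively), so the expectation value in $\ket\Psi$ collapses to a reduced trace: $\bra{\Psi}[K_{AB},K_{ABC}]\ket{\Psi}=\Tr(\rho_{ABC}[K_{AB},K_{ABC}])$ and $\bra{\Psi}[K_{AB},K_{B}]\ket{\Psi}=\Tr(\rho_{AB}[K_{AB},K_{B}])$. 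Because $\rho_{Z}$ commutes with $K_{Z}=-\ln\rho_{Z}$, cyclicity of the trace gives $\Tr(\rho_{Z}[X,K_{Z}])=0$ for any $X$; applying this with $Z=ABC$ and $Z=AB$ makes both terms vanish, hence $J(A,B,C)_{\ket\Psi}=0$.

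As an independent cross-check I would also carry the argument through the Hayden--Jozsa--Petz--Winter structure $\rho_{ABC}=\bigoplus_{i}p_{i}\,\rho_{Ab_{L}^{i}}\otimes\rho_{b_{R}^{i}C}$ already invoked in App.~\ref{app:markov-flow} (see~\cite{Hayden:2004wwj}): within each block $K_{AB}$ and $K_{BC}$ decompose into sums of modular Hamiltonians on $Ab_{L}^{i}$, $b_{L}^{i}$, $b_{R}^{i}$ and $b_{R}^{i}C$; commutators of terms supported on disjoint factors drop out, and the survivors reduce to $\Tr(\rho_{Ab_{L}^{i}}[K_{Ab_{L}^{i}},K_{b_{L}^{i}}])=0$ and $\Tr(\rho_{b_{R}^{i}C}[K_{b_{R}^{i}},K_{b_{R}^{i}C}])=0$ by the same $[\rho,K]=0$ mechanism.

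The main --- indeed essentially the only --- obstacle is bookkeeping around non-invertible reduced density matrices: one must be sure the operator identity above genuinely annihilates $\ket\Psi$ and that $\bra{\Psi}O_{1}O_{2}\ket{\Psi}=\Tr(\rho_{Z}O_{1}O_{2})$ remains valid when the relevant $\rho_{Z}$ have nontrivial kernels. This is precisely the situation already treated in App.~\ref{app:non-invertible} and in the proof of Lemma~\ref{lemma:Markov-flow}, so no new technical input is needed; everything else is elementary trace manipulation.
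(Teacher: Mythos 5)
Your proof is correct. Note that the paper itself does not prove Fact~\ref{fact:J-markov}; it defers entirely to \cite{Kim:2021tse,Kim2021}, so there is no in-paper argument to match against. Your main route is sound: the operator identity $(K_{AB}+K_{BC}-K_{ABC}-K_B)\ket\Psi=0$ is exactly the paper's Eq.~\eqref{eq:K-markov-decomp-app}, the substitution correctly cancels the $K_{AB}^2$ pieces and leaves $\bra{\Psi}[K_{AB},K_{ABC}]\ket{\Psi}+\bra{\Psi}[K_{AB},K_{B}]\ket{\Psi}$, and each of these collapses to a reduced trace of the form $\Tr(\rho_Z[X,K_Z])$, which vanishes by cyclicity and $[\rho_Z,K_Z]=0$. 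Your ``cross-check'' via the Hayden--Jozsa--Petz--Winter decomposition is in fact essentially the argument used in the cited references, so the two routes you give bracket the standard proof; the operator-identity version is arguably cleaner because it never needs the explicit block structure, only its differentiated consequence. The one caveat you correctly flag --- that splitting the commutator into $\bra{\Psi}K_{AB}K_{BC}\ket{\Psi}$ and its adjoint, and the identity $\bra{\Psi}O\ket{\Psi}=\Tr(\rho_Z O)$, require care when the $\rho$'s have kernels --- is inherited from the definition of $J$ itself rather than introduced by your argument, and is handled by restricting each $K_X$ to the support of $\rho_X$ as in App.~\ref{app:non-invertible}.
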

Notice in the statement \ref{fact:J-markov}, there is no specific requirement of the topology of the tripartite region $ABC$. 

\begin{Fact}\label{fact:J-def-inv}
    If $\ket\Psi$ satisfies bulk {\bf A1}, then $J(A,B,C)_{\ket{\Psi}}$ is a constant for any choice of $ABC$ in the bulk which is topologically equivalent to the one shown in Fig.~\ref{fig:bulk-A1-J} (Right).  
\end{Fact}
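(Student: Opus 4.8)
\emph{Proof proposal.}
The plan is to establish \emph{deformation invariance} by showing that $J(A,B,C)_{\ket{\Psi}}$ is unchanged under a generating set of elementary local moves on the partition, and then observing that any two bulk tripartite regions with the reference topology of Fig.~\ref{fig:bulk-A1-J} (Right) are linked by a finite chain of such moves. I will use four inputs. (1) $J(A,B,C)_{\ket{\Psi}}$ depends only on $\rho_{ABC}$. (2) The antisymmetry $J(A,B,C)_{\ket{\Psi}} = -J(C,B,A)_{\ket{\Psi}}$, which I would record alongside the other elementary properties in App.~\ref{app:mod-properties}. (3) A chain rule for the modular commutator: when a region is split as $C = C_1 C_2$ with $I(A:C_2\,|\,BC_1)_{\ket{\Psi}} = 0$, one has $J(A,B,C_1C_2)_{\ket{\Psi}} = J(A,B,C_1)_{\ket{\Psi}} + J(A,BC_1,C_2)_{\ket{\Psi}}$; combined with Fact~\ref{fact:J-markov}, whose hypothesis is exactly that same Markov condition, this collapses to $J(A,B,C_1C_2)_{\ket{\Psi}} = J(A,B,C_1)_{\ket{\Psi}}$. (4) Fact~\ref{fact:markov}: bulk {\bf A1} makes every conditional mutual information of this type vanish, provided the triple of subregions involved sits in the Markov topology of Eq.~\eqref{eq:CMI-0}.

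The \emph{elementary move} is then: pick a small, disk-like patch $F$, farther than the (finite) entanglement range of $\ket{\Psi}$ from at least one of the three regions and separated from it by a thick simply-connected collar, and transfer $F$ between two regions, or between a region and the complement $E = \overline{ABC}$. For a transfer from $C$ to $E$ (shrinking $C$), set $C_1 = $ (old $C$ minus $F$) and $C_2 = F$, and apply (3)+(4) with $A$ as the shielded region, so that $J$ is unchanged. Transfers between $B$ and $C$ are the same with a different grouping; transfers between $A$ and $B$ follow by first applying the antisymmetry (2); and transfers between a region and $E$ — which is what makes the whole disk $ABC$ grow or shrink — are handled by the analogous chain rule in the relevant slot, again with the required Markov condition furnished by bulk {\bf A1}. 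Deformations of the middle region $B$ away from the triple point decompose into the moves already listed. In every case the only nontrivial input is a single conditional-mutual-information-zero condition, always available from Fact~\ref{fact:markov}.

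Finally I would invoke the elementary topological fact that any bulk tripartite region topologically equivalent to Fig.~\ref{fig:bulk-A1-J} (Right) is reachable from a fixed reference region by finitely many elementary moves, through a sequence of intermediate partitions all of the reference topology; since $J(A,B,C)_{\ket{\Psi}}$ is constant along this sequence, it takes a common value on all such regions. I expect the main obstacle to be the combinatorial--topological bookkeeping of the middle steps: arranging each move so that the subregion playing the role of $C_2$ together with its complement sits in precisely the configuration for which Fact~\ref{fact:markov} applies, checking that no intermediate partition degenerates (the triple point and the $A$--$B$, $B$--$C$ adjacencies, as well as disk-likeness, must be preserved throughout), and ensuring the chain rule (3) is used only in its valid regime — which in turn requires treating non-invertible reduced density matrices exactly as in App.~\ref{app:non-invertible}. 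Granting (3), the remainder is a finite, essentially mechanical argument, following \cite{Kim:2021tse,Kim2021}.
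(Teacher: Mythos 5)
Your proposal is correct and follows essentially the same route as the paper: the paper defers the detailed argument for Fact~\ref{fact:J-def-inv} to \cite{Kim:2021tse,Kim2021}, but its own proof of the exactly analogous statement for $\Sigma$ (Lemma~\ref{lemma:topology-Sigma}, Cases 1--6 in App.~\ref{app:sigma-property}) is precisely your scheme of elementary patch transfers, each justified by a single Markov condition supplied by Fact~\ref{fact:markov} (with SSA handling the shape bookkeeping) together with the vanishing of the invariant on Markov states and, for moves in the first slot, the antisymmetry. The only caveats are the ones you already flag, namely the combinatorial bookkeeping of intermediate partitions and the treatment of non-invertible reduced density matrices.
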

This fact indicates that $J(A,B,C)_{\ket\Psi}$ for $ABC$ chosen as Fig.~\ref{fig:bulk-A1-J} (Right) is a topological quantity. Moreover, it is verified numerically that it indeed computes chiral central charge. A theoretical argument of why it computes chiral central charge in a representative state of a chiral gapped phase of matter will be given in \cite{Corner_ent}.

\subsection{Properties of electric Hall conductance}
\label{app:sigma-property}
In this subsection, we introduce two important properties of $\Sigma(A,B,C)_{\ket\Psi}$, which are in analog of the properties for $J(A,B,C)_{\ket\Psi}$ introduced above. One is for the quantum Markov state, and the other is the deformation invariance (i.e., topological) property. 

In particular, we shall provide a proof of the topological property (i.e. deformation invariance) of $\Sigma(A,B,C)_{\ket\Psi}$, which is different from the one given in \cite{fan2022extracting}. Our proof only utilizes bulk {\bf A1} and avoids the ``cluster property'' used in \cite{fan2022extracting}. 

In the following statements and the proofs, we shall use $\ket\Psi$ to denote the quantum many-body state with on-site $\gU(1)$ symmetry under the setup introduced in the main text and $K_{X}$ to denote the modular Hamiltonian of $\rho_X$ from $\ket{\Psi}$. We shall also use the notation 
\begin{equation}
    \langle O \rangle \equiv \langle \Psi| O |\Psi\rangle,
\end{equation}
for the expectation value of an operator $O$ on $\ket{\Psi}$. We  

\subsubsection{Basic facts about symmetry charge operators}
Before we start to discuss the properties, it is useful to first list some of the basic facts of the symmetry operator. We shall heavily use these facts in the later discussions.  

\begin{Fact}[Basic facts] \label{fact:basic}
    Below we collect four basic facts about the local charge operators of the on-site $\gU(1)$ symmetry and modular Hamiltonians from the symmetric state:  
    \begin{enumerate}
        \item $Q_{AB} = Q_A + Q_B$ for any regions $A,B$.  
        \item $[K_X,Q_X]=0$. Moreover, $[K_A,Q_{AB}] = 0$ and $ [K_{A},Q_{AB}^2] = 0$ for any regions $A,B$. 
        \item $\langle [K_{AB},Q_A^2]\rangle = \langle [K_{A},Q_{AB}^2]\rangle = 0$ for any region $AB$. 
        \item $\langle [K_{AB},Q_{BC}^2] \rangle = \langle [K_{AB},Q_{\overline{BC}}^2] \rangle$ for any tripartite region $ABC$. 
    \end{enumerate}
\end{Fact}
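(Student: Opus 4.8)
The plan is to reduce all four items to three elementary inputs and then assemble them; nothing here is deep, so the work is mostly organizational. Item~1 is immediate: $Q_X=\sum_{v\in X}Q_v$ is additive over the (disjoint) sets of sites making up $A$ and $B$. The three inputs I would establish first are: \textbf{(i)} $[K_X,Q_X]=0$ for every region $X$; \textbf{(ii)} the identity $\langle\Psi|[K_X,O]|\Psi\rangle=0$ for any operator $O$ supported inside $X$; and \textbf{(iii)} charge conservation, namely that $\ket\Psi$ is an eigenvector of the total charge, $Q_{\mathrm{tot}}\ket\Psi=N\ket\Psi$, together with the consequence $[K_{AB},Q_{\mathrm{tot}}]=0$.

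To get (i) I would write the global symmetry operator as $U(t)=U_X(t)\,U_{\overline{X}}(t)$ with $U_X(t)=\prod_{v\in X}e^{\ii Q_v t}$, use $U(t)\,\ket\Psi\bra\Psi\,U(t)^\dagger=\ket\Psi\bra\Psi$, and trace out $\overline{X}$; since the partial trace over $\overline{X}$ is unchanged under conjugation by a unitary supported on $\overline{X}$, this gives $\rho_X=U_X(t)\,\rho_X\,U_X(t)^\dagger$ for all $t$, hence $[\rho_X,Q_X]=0$ and therefore $[K_X,Q_X]=0$ since $K_X=-\ln\rho_X$ is a function of $\rho_X$. For (ii) I would expand $\langle\Psi|[K_X,O]|\Psi\rangle=\Tr_X(\rho_X K_X O)-\Tr_X(\rho_X O K_X)$, commute $\rho_X$ past $K_X$ in the first term (they commute) and then use cyclicity of the trace to turn it into the second term, so the two cancel. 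Input (iii) holds because $U(t)\ket\Psi\propto\ket\Psi$ for all $t$ forces $\ket\Psi$ to have definite total charge, and $[K_{AB},Q_{\mathrm{tot}}]=[K_{AB},Q_{AB}]+[K_{AB},Q_{\overline{AB}}]=0$ by (i) (with $X=AB$) and disjointness of $AB$ from $\overline{AB}$.

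Items 2 and 3 then follow quickly. For Item~2: $[K_A,Q_{AB}]=[K_A,Q_A]+[K_A,Q_B]=0$, the first commutator by (i) and the second because $K_A$ and $Q_B$ are supported on disjoint sites; then $[K_A,Q_{AB}^2]=0$ by the Leibniz rule. For Item~3: $\langle[K_{AB},Q_A^2]\rangle=0$ is a direct instance of (ii), since $Q_A^2$ is supported on $A\subseteq AB$; and $\langle[K_A,Q_{AB}^2]\rangle=0$ is immediate because that commutator already vanishes as an operator by Item~2.

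Item~4 is where (iii) is actually needed. Writing $\overline{BC}$ for the complement, so that $Q_{BC}+Q_{\overline{BC}}=Q_{\mathrm{tot}}$ with $[Q_{BC},Q_{\overline{BC}}]=0$, I would use the algebraic identity $Q_{BC}^2-Q_{\overline{BC}}^2=(2Q_{BC}-Q_{\mathrm{tot}})\,Q_{\mathrm{tot}}$ together with $[K_{AB},Q_{\mathrm{tot}}]=0$ to get $[K_{AB},\,Q_{BC}^2-Q_{\overline{BC}}^2]=2\,[K_{AB},Q_{BC}]\,Q_{\mathrm{tot}}$; taking the expectation value in $\ket\Psi$ and using $Q_{\mathrm{tot}}\ket\Psi=N\ket\Psi$ reduces the right-hand side to $2N\,\langle[K_{AB},Q_{BC}]\rangle$, which vanishes because $\langle[K_{AB},Q_{BC}]\rangle=\langle[K_{AB},Q_B]\rangle+\langle[K_{AB},Q_C]\rangle=0$ --- the first term by (ii) since $Q_B$ is supported on $AB$, the second because $K_{AB}$ and $Q_C$ have disjoint supports. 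The one place I expect to have to be careful is well-definedness on an infinite lattice, where $Q_{\mathrm{tot}}$ and $Q_{\overline{BC}}$ are formally unbounded sums; I would handle this by running the argument on a finite lattice and taking the thermodynamic limit only at the end, or by first using that $K_{AB}$ commutes with $Q_v$ for every $v\notin AB$ to reduce $[K_{AB},Q_{\overline{BC}}^2]$ to terms involving only the finitely many sites of $A$ and a single linear factor of $Q_{\overline{BC}}$, after which the manipulations go through at the level of expectation values. Apart from this bookkeeping, every step is routine once (i)--(iii) are in place.
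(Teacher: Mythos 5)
Your proof is correct and follows essentially the same route as the paper's: the on-site symmetry gives $U_X(t)\rho_X U_X(t)^\dagger=\rho_X$ and hence $[K_X,Q_X]=0$, disjoint supports kill the cross terms, and total-charge conservation trades $Q_{BC}$ for $q-Q_{\overline{BC}}$ on the state in item 4. The only cosmetic differences are that you use cyclicity of the trace ($\langle[K_X,O]\rangle=0$ for $O$ supported in $X$) where the paper uses the flip $K_{AB}\ket{\Psi}=K_{\overline{AB}}\ket{\Psi}$, and you package item 4 via the identity $Q_{BC}^2-Q_{\overline{BC}}^2=(2Q_{BC}-Q_{\mathrm{tot}})Q_{\mathrm{tot}}$ rather than expanding $(q-Q_{\overline{BC}})^2$ --- both reduce to the same vanishing of $\langle[K_{AB},Q_{BC}]\rangle$.
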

\begin{proof}
    Fact (1) is simply because the local charge operator is a sum over on-site operators. That is, for a region $X$, $Q_X = \sum_{v\in X} Q_v$. 

    Fact (2) is the consequence of the symmetry. Explicitly, $U(t)\ket{\Psi} = U_X(t)U_{\overline{X}}(t)\ket{\Psi} = \alpha \ket\Psi$, where $\alpha$ is a phase. Then $U_{X}(t)\ket{\Psi} = \alpha U_{\overline{X}}^{\dagger}(t)\ket{\Psi}$ and hence 
    \begin{equation}\label{eq:U-rho-commuting}\begin{split}
        &U_{X}(t)\ket{\Psi} \bra{\Psi}U_{X}(t)^{\dagger} = U_{\overline{X}}(t)^{\dagger}\ket{\Psi} \bra{\Psi}U_{\overline{X}}(t) \\ 
        \Rightarrow &\quad U_X(t)\rho_X U_X(t)^{\dagger} = \rho_X,
    \end{split}
    \end{equation}
    where the $\Rightarrow$ can be seen by tracing out $\overline{X}$. Then one can see $U_X(t)K_X U_X(t)^{\dagger} = K_X$ since $K_X = -\ln \rho_X$. Taking the first derivative of $t$ at $t=0$, we obtain $[K_X,Q_X]=0$. With $Q_{AB} = Q_{A}+Q_B$, we can see $[K_A,Q_{AB}]=0$. To see $[K_{A},Q_{AB}^2]= 0$, we can utilize the Fact (1) to write $Q_{AB} = Q_A + Q_B$. Therefore, 
    \begin{equation}\begin{split}
        & [K_{A},Q_{AB}^2]\\ 
        =& [K_{A},Q_{A}^2] +  [K_{A},Q_{B}^2] + 2  [K_{A},Q_{A}Q_B] \\ 
        =&0.
    \end{split}
    \end{equation}
    The three commutators above all vanish because $[K_A,Q_A]=0$ and $[K_A,Q_B]=0$.  
        
    Now we explain Fact (3). To see $\langle [K_{AB},Q_A^2]\rangle = 0$, we utilize the flipping $K_{AB}\ket\Psi = K_{\overline{AB}}\ket{\Psi}$ due to the property of pure state. Therefore, 
    \begin{equation}
        \langle [K_{AB},Q_A^2]\rangle = \langle [K_{\overline{AB}},Q_A^2]\rangle = 0,
    \end{equation}
    as the compliment of $AB$ has no overlap with $A$. 

    Fact (4) is for any tripartite region $ABC$. Because the state has $\gU(1)$ symmetry, $\ket\Psi$ is an eigenstate of the symmetry operator $Q$. Let $q$ be the eigenvalue, we thus have $Q\ket{\Psi} = (Q_{BC}+Q_{\overline{BC}})\ket{\Psi} = q \ket\Psi$. Therefore, we can replace $Q_{BC}\ket\Psi$ with $(q-Q_{\overline{BC}})\ket\Psi$ and as a result 
    \begin{equation}\begin{split}
        \langle [K_{AB},Q_{BC}^2] \rangle &= \langle [K_{AB},q^2-2q Q_{\overline{BC}} + Q_{\overline{BC}}^2]\rangle \\ 
        &= \langle [K_{AB}, Q_{\overline{BC}}^2]\rangle. 
    \end{split}
    \end{equation}
    In the derivation of the last line in the equation above, $\langle [K_{AB},q^2]\rangle = 0$ is obvious as $q$ is simply a number. To see $\langle [K_{AB},2q Q_{BC}]\rangle = 0$, we can show that 
    \begin{equation}\begin{split}
        \langle [K_{AB},Q_{BC}]\rangle =& \langle [K_{AB},Q_{B}]\rangle + \langle [K_{AB},Q_{C}]\rangle  \\ 
        =& \langle [K_{\overline{AB}},Q_{B}]\rangle = 0. 
    \end{split}
    \end{equation} 
\end{proof}

\subsubsection{Properties of $\Sigma(A,B,C)$}

Now we discuss two important properties of $\Sigma(A,B,C)$. We remind readers that 
\begin{equation}
    \Sigma(A,B,C)_{\ket\Psi} = \frac{\ii}{2}\langle [K_{AB},Q_{BC}^2]\rangle,
\end{equation}
therefore, we shall mainly work with the commutator $\langle [K_{AB},Q_{BC}^2]\rangle$ without bothering to convert it to $\Sigma(A,B,C)$. 

\begin{lemma}\label{lemma:Sigma-markov}
    If $I(A:C|B)_{\ket\Psi} = 0$, then 
    \begin{equation} \label{eq:Sigma-markov-app-0}
       \langle [K_{AB},Q_{BC}^2]\rangle = 2\langle [K_{AB},Q_B Q_C] \rangle = 0.
    \end{equation}
\end{lemma}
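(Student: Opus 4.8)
\emph{Proof proposal.} The plan is to reduce the statement to the single scalar identity $\langle[K_{AB},Q_BQ_C]\rangle=0$, and to obtain that identity by combining the Markov decomposition of $K_{AB}$ with a few applications of the flipping move $K_X\ket{\Psi}=K_{\overline X}\ket{\Psi}$ and the basic facts collected in Fact~\ref{fact:basic}.

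First I would dispose of the first equality in~\eqref{eq:Sigma-markov-app-0}, which does not use the Markov hypothesis at all. Since $B$ and $C$ are disjoint, $[Q_B,Q_C]=0$, and part (1) of Fact~\ref{fact:basic} gives $Q_{BC}^2=Q_B^2+Q_C^2+2Q_BQ_C$, so $[K_{AB},Q_{BC}^2]=[K_{AB},Q_B^2]+[K_{AB},Q_C^2]+2[K_{AB},Q_BQ_C]$. Taking the expectation value on $\ket{\Psi}$, the first term vanishes by part (3) of Fact~\ref{fact:basic} (equivalently, flip $K_{AB}\ket{\Psi}=K_{\overline{AB}}\ket{\Psi}$ and use that $\overline{AB}$ is disjoint from $B$), while the second term vanishes already as an operator because $C$ is disjoint from $AB$. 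Hence $\langle[K_{AB},Q_{BC}^2]\rangle=2\langle[K_{AB},Q_BQ_C]\rangle$, so the Markov hypothesis is needed only to show $\langle[K_{AB},Q_BQ_C]\rangle=0$.

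For the latter I would invoke the Markov hypothesis through the operator relation $(K_{AB}+K_{BC}-K_{ABC}-K_B)\ket{\Psi}=0$ of Eq.~\eqref{eq:K-markov-decomp-app}, which holds since $I(A:C|B)_{\ket\Psi}=0$ (it is the $t$-derivative at $t=0$ of the Markov decomposition move, Lemma~\ref{lemma:Markov-flow}), together with its adjoint $\bra{\Psi}(K_{AB}+K_{BC}-K_{ABC}-K_B)=0$. Substituting $K_{AB}\ket{\Psi}=(K_{ABC}+K_B-K_{BC})\ket{\Psi}$ and its adjoint into $\langle\Psi|[K_{AB},Q_BQ_C]|\Psi\rangle$ rewrites it as $\langle[K_{ABC},Q_BQ_C]\rangle+\langle[K_B,Q_BQ_C]\rangle-\langle[K_{BC},Q_BQ_C]\rangle$. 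I would then kill the three terms separately: (i) $\langle[K_B,Q_BQ_C]\rangle=0$, because $[K_B,Q_C]=0$ (disjointness) and $[K_B,Q_B]=0$ by part (2) of Fact~\ref{fact:basic}; (ii) $\langle[K_{BC},Q_BQ_C]\rangle=0$, by writing $2Q_BQ_C=Q_{BC}^2-Q_B^2-Q_C^2$ and using $[K_{BC},Q_{BC}^2]=0$ (part (2) of Fact~\ref{fact:basic}) together with $\langle[K_{BC},Q_B^2]\rangle=\langle[K_{BC},Q_C^2]\rangle=0$ (part (3) of Fact~\ref{fact:basic}); and (iii) $\langle[K_{ABC},Q_BQ_C]\rangle=0$, by the flipping move $K_{ABC}\ket{\Psi}=K_{\overline{ABC}}\ket{\Psi}$, since $\overline{ABC}$ is disjoint from both $B$ and $C$, so $[K_{\overline{ABC}},Q_BQ_C]=0$ as an operator. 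Combining, $\langle[K_{AB},Q_BQ_C]\rangle=0$, which with the first equality gives the full claim.

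I do not anticipate a genuine obstacle; the only point needing care is the familiar one of non-invertible reduced density matrices. The modular Hamiltonians $K_{AB},K_{BC},K_{ABC},K_B$ are defined only on the supports of the corresponding density matrices, but every manipulation above uses them only as they act on $\ket{\Psi}$ or $\bra{\Psi}$; using $\Pi(\rho_X)\ket{\Psi}=\ket{\Psi}$ from App.~\ref{app:non-invertible} (Eq.~\eqref{eq:Pi-action-1}), the substitution coming from Eq.~\eqref{eq:K-markov-decomp-app} and the commutator rewritings are all legitimate, and I would state this explicitly. An alternative route, parallel to App.~\ref{app:markov-flow}, would decompose $\rho_{ABC}$ blockwise via the structure theorem and check the commutators block by block, but it is longer.
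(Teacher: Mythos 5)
Your proposal is correct and follows essentially the same route as the paper's proof: expand $Q_{BC}^2$ via Fact~\ref{fact:basic}~(1) and (3) to isolate $2\langle[K_{AB},Q_BQ_C]\rangle$, substitute the Markov relation $K_{AB}\ket{\Psi}=(K_{ABC}+K_B-K_{BC})\ket{\Psi}$ and its adjoint, and kill the three resulting terms by flipping to complements and using $[K_B,Q_B]=0$. The only (immaterial) difference is in disposing of $\langle[K_{BC},Q_BQ_C]\rangle$, where the paper simply flips $K_{BC}\to K_{\overline{BC}}$ while you re-expand $2Q_BQ_C=Q_{BC}^2-Q_B^2-Q_C^2$; both are valid.
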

This implies $\Sigma(A,B,C)_{\ket{\Psi}}=0$ if $\ket{\Psi}$ satisfies Markov condition on $ABC$. 
\begin{proof}
    The first equality in Eq.~\eqref{eq:Sigma-markov-app-0} can be obtained by $Q_{BC} = Q_B + Q_C$ [Fact~\ref{fact:basic} (1)]: 
    \begin{equation}\label{eq:Sigma-markov-app}
        \begin{split}
        &\langle [K_{AB},Q_{BC}^2] \rangle \\ =& \langle [K_{AB},Q_{B}^2] \rangle + 2 \langle [K_{AB},Q_B Q_C] \rangle + \langle [K_{AB},Q_{C}^2] \rangle \\ 
        =&2 \langle [K_{AB},Q_B Q_C] \rangle, 
    \end{split}
    \end{equation}
    where first vanishes because of [Fact~\ref{fact:basic} (3)] and the third term in Eq.~\eqref{eq:Sigma-markov-app} vanishes since the operators in the commutator has no common support. 

Now we shall show the remaining term in Eq.~\eqref{eq:Sigma-markov-app} also vanishes. The key is the Markov condition. With $I(A:C|B)_{\ket\Psi} =0$, one can write 
\begin{equation}\label{eq:decompose-KABC-app}
    K_{ABC}\ket\Psi = (K_{AB}+K_{BC}-K_B)\ket\Psi .
\end{equation}
Therefore 
\begin{equation} \begin{split}
    &\langle [K_{AB},Q_B Q_C] \rangle \\ 
    =& \langle [ K_{ABC}, Q_B Q_C] \rangle - \langle [ K_{BC}, Q_B Q_C] \rangle + \langle [ K_{B}, Q_B Q_C] \rangle \\ 
    =&\langle [ K_{\overline{ABC}}, Q_B Q_C] \rangle - \langle [ K_{\overline{BC}}, Q_B Q_C] \rangle + \langle [ K_{B}, Q_B Q_C] \rangle \\ 
    =&\langle [ K_{B}, Q_B Q_C]\rangle \\ 
    =& \langle Q_B[ K_{B},  Q_C] \rangle + \langle [ K_{B},  Q_B] Q_C \rangle \\ 
    =&0.  
\end{split}
\end{equation}
In the derivation, we first utilize Eq.~\eqref{eq:decompose-KABC-app} to replace $K_{AB}\ket\Psi$. Then, in the second line to the third line, we flip the support of $K_{ABC},K_{BC}$ to their compliments, which have no overlap with $Q_B Q_C$ and hence the first two commutators in the third line vanish. To show the remaining term vanishes $\langle [ K_{B}, Q_B Q_C] \rangle = 0$, we utilize $[K_B,Q_B]=  0$ [Fact~\ref{fact:basic} (2)] in the second last line. 
\end{proof}

Now we are ready to discuss the topological property. 
\begin{lemma}\label{lemma:topology-Sigma}
    If $\ket\Psi$ satisfies bulk {\bf A1}, then $\Sigma(A,B,C)_{\ket{\Psi}}$ is a constant for any choice of $ABC$ topologically equivalent to the one shown in Fig.~\ref{fig:bulk-A1-J} (Right). 
\end{lemma}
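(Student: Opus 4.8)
\emph{Proof proposal.} The plan is to show that $\Sigma(A,B,C)_{\ket\Psi}$ is unchanged under an elementary deformation of the tripartite region and then to invoke the fact that any two configurations topologically equivalent to Fig.~\ref{fig:bulk-A1-J} (Right) are connected by a finite sequence of such deformations. This parallels the proof of the corresponding statement for the modular commutator, Fact~\ref{fact:J-def-inv}, given in \cite{Kim:2021tse,Kim2021}: the topology/interpolation step is identical, and the only new content is the algebraic check that the quantity $\Sigma$ — which contains $Q_{BC}^2$ in place of $K_{BC}$ — does not change under one move. An elementary move reassigns a small ``sliver'' $\delta$ from one of the four parts $\{A,B,C,D\}$, with $D\equiv\overline{ABC}$, to an adjacent one; the sliver is taken disk-like and away from the triple point so that the relevant three-region subsystems are disk-like and bulk {\bf A1} supplies the needed Markov conditions through Fact~\ref{fact:markov}.

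It is convenient to separate the moves according to whether they change the set $AB$ or the set $BC$ (a general move is a composition of these). For a move changing only $AB$ (e.g.\ $A\to A\cup\delta$ with $\delta$ a sliver of $D$ along the $A$-$D$ boundary), $Q_{BC}$ is untouched, and using the Markov decomposition $K_{AB\delta}\ket\Psi=(K_{A\delta}+K_{AB}-K_A)\ket\Psi$ — valid by Fact~\ref{fact:markov} in the form of Eq.~\eqref{eq:K-markov-decomp-app} — the difference collapses to $\tfrac{\ii}{2}\langle[K_{A\delta}-K_A,\,Q_{BC}^2]\rangle$, which vanishes because $A\delta$ is disjoint from $BC$. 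For a move changing $BC$ (sliding the $C$-$D$, $A$-$B$, or $B$-$D$ boundary), $Q_{B'C'}=Q_{BC}+Q_\delta$ by additivity [Fact~\ref{fact:basic}(1)], so the difference is $\Delta\Sigma=\tfrac{\ii}{2}\langle[K_{A'B'},Q_\delta^2]\rangle+\ii\langle[K_{A'B'},Q_\delta Q_{BC}]\rangle$. The $Q_\delta^2$ term vanishes by the flip $K_{A'B'}\ket\Psi=K_{\overline{A'B'}}\ket\Psi$ and disjointness, exactly as in the proof of Fact~\ref{fact:basic}(3). The cross term is to be reduced to zero by combining the flip (applied to $K_{A'B'}$ acting on $\ket\Psi$ and on $\bra\Psi$), the Markov decomposition of the modular Hamiltonian licensed by the relevant disk-like region, the additivity $Q_{BC}=Q_B+Q_C$, and $[K_X,Q_X]=0$ — the same chain of manipulations that proves Lemma~\ref{lemma:Sigma-markov}; in several cases $\Delta\Sigma$ can be recognized directly as $\Sigma$ evaluated on a triple on which $\ket\Psi$ is Markov, so that it vanishes by Lemma~\ref{lemma:Sigma-markov}.

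I expect the main obstacle to be precisely this cross term for the moves that thin/fatten $B$ and those across the $A$-$B$ and $C$-$D$ boundaries: one must pick, for each position of the sliver, the correct instance of $I(\,\cdot:\cdot\,|\,\cdot\,)_{\ket\Psi}=0$ furnished by Fact~\ref{fact:markov}, check that the sliver can indeed be kept disk-like and away from the triple point (treating the neighborhood of the triple point itself as a finite sequence of small moves, as in the treatment of $J$ in \cite{Kim:2021tse,Kim2021}), and then push the algebra through. A secondary, purely technical point is that the reduced density matrices need not be invertible; as in App.~\ref{app:non-invertible} this is harmless, since every modular-Hamiltonian identity is applied to $\ket\Psi$ or to density matrices it reduces to, and the structure theorem of Markov states underlying Eq.~\eqref{eq:K-markov-decomp-app} remains valid in the finite-dimensional tensor-product (and $\mathbb{Z}_2$-graded) setting.
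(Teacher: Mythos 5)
Your overall strategy --- reduce to elementary sliver moves, use the Markov conditions supplied by bulk {\bf A1} to decompose the relevant modular Hamiltonian, and kill the cross term with Lemma~\ref{lemma:Sigma-markov} --- is the same as the paper's, and your treatment of the moves away from the triple intersection points reproduces the paper's Cases 1--3 correctly: there the cross term $\langle[K_{AB},Q_BQ_\delta]\rangle$ is indeed exactly the quantity annihilated by Lemma~\ref{lemma:Sigma-markov} applied to a condition of the form $I(\delta:A|B)_{\ket\Psi}=0$.

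The gap is where you yourself flag it: the moves at a triple intersection point (the paper's Cases 4--6, Fig.~\ref{fig:Sigma-deform} (Right)) are not a composition of generic small moves, and their cross terms are \emph{not} instances of Lemma~\ref{lemma:Sigma-markov}. Two ingredients absent from your toolbox are needed. First, for $A\to Ax'$ with $x'$ touching $C$, the usable Markov condition conditions on the \emph{union} $AB$ rather than on a single part: $I(x':C|AB)_{\ket\Psi}=0$ gives $K_{ABx'}\ket\Psi=(K_{AB}+K_{ABCx'}-K_{ABC})\ket\Psi$, after which the extra terms vanish by Fact~\ref{fact:basic}(3) --- not by disjointness, since $ABCx'$ contains $BC$. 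Second, for $C\to Cz'$ with $z'$ abutting $A$, the condition $I(A:z'|B)_{\ket\Psi}=0$ that Lemma~\ref{lemma:Sigma-markov} would require is not furnished by Fact~\ref{fact:markov}; the paper instead uses $[K_{AB},Q_{AB}]=0$ to trade $\langle[K_{AB},Q_BQ_{z'}]\rangle$ for $-\langle[K_{AB},Q_AQ_{z'}]\rangle$, and only then applies the decomposition $K_{AB}\ket\Psi=(K_{ABz'}+K_A-K_{Az'})\ket\Psi$ coming from $I(z':B|A)_{\ket\Psi}=0$, together with support flips. ``Pick the correct instance of the Markov condition and push the algebra through'' is precisely where the proof lives, and the correct instances are not the ones your outline points to. A smaller omission: deformations of the boundaries interior to $ABC$ are reduced to the exterior cases by flipping not only $K_{AB}\to K_{\overline{AB}}$ but also $Q_{BC}^2\to Q_{\overline{BC}}^2$ via Fact~\ref{fact:basic}(4); this second flip does not appear in your argument. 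Your closing remark on non-invertible density matrices is fine and matches App.~\ref{app:non-invertible}.
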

To prove this property, we need to show that $\Sigma(A,B,C)$ is invariant under deformations of $ABC$ that does not change the topology and orientation of the tripartite region. We shall heavily use Fact~\ref{fact:markov}, which is the consequence of the bulk {\bf A1}. 

\begin{figure}[!h]
    \centering
    \includegraphics[width=0.8\linewidth]{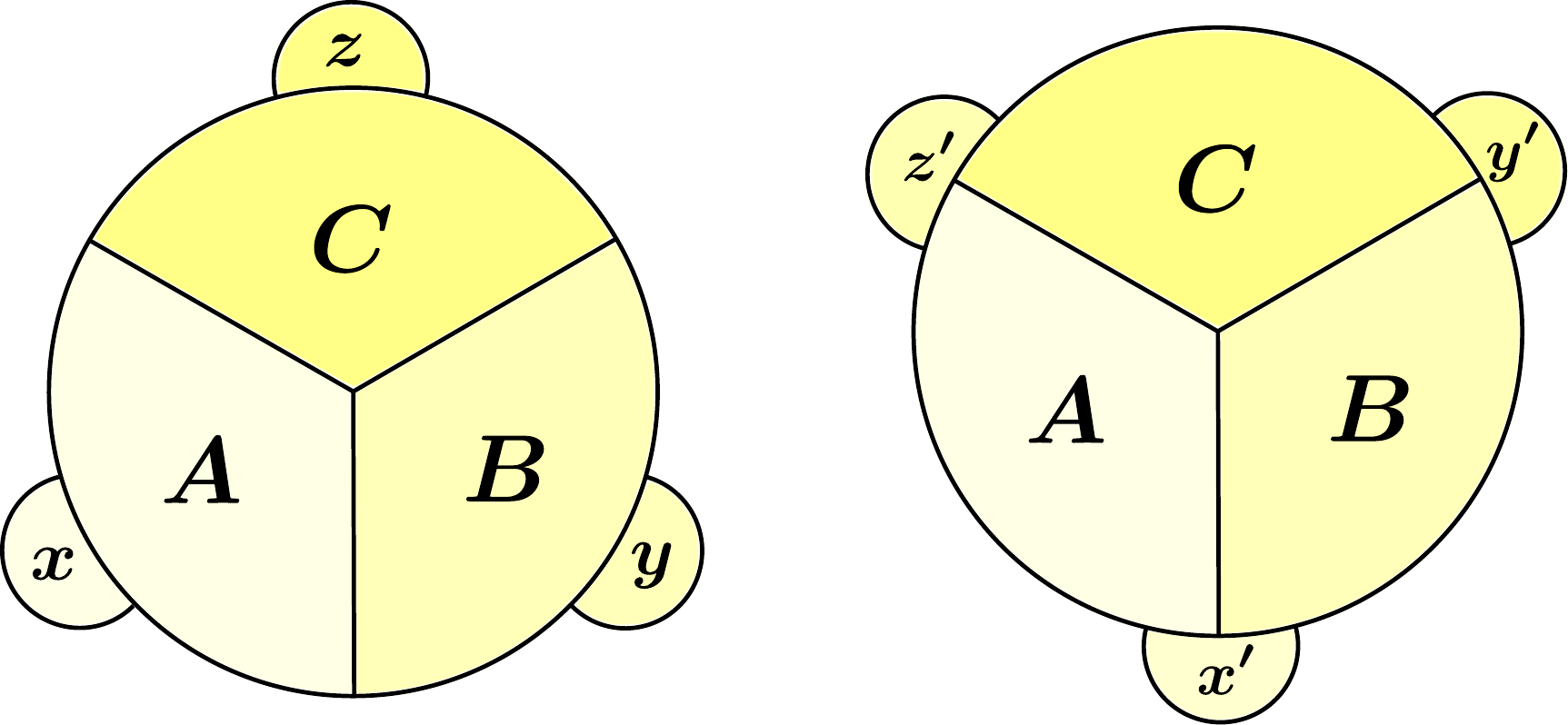}
    \caption{Various region deformations of $\Sigma(A,B,C)$ away from (left) and near (right) a triple intersection point among $A,B,C,\overline{ABC}$.}
    \label{fig:Sigma-deform}
\end{figure}

We first study the deformations that are away from the triple intersection points, shown in Fig.~\ref{fig:Sigma-deform} (Left). The derivation utilizes Markov conditions $I(x:B|A)=0,I(y:A|B)=0,I(z:A|B)=0$ from Fact~\ref{fact:markov}\footnote{The derivation of the last Markov condition from Fact~\ref{fact:markov} is explained later.}. 
\begin{itemize}
    \item {\it (Case 1)} $A \to Ax$: We wish to show 
    \begin{equation}\label{eq:case1}
        \langle [K_{ABx},Q_{BC}^2]\rangle = \langle [K_{AB},Q_{BC}^2]\rangle. 
    \end{equation}
    The derivation uses $I(x:B|A) = 0$ [Fact~\ref{fact:markov}]. With this Markov condition, we can substitude $K_{ABx}\ket\Psi$ with 
    \begin{equation}
        K_{ABx}\ket\Psi = (K_{AB}+K_{Ax}-K_A)\ket\Psi, 
    \end{equation}
    on the left-hand side of Eq.~\eqref{eq:case1}. Then the terms that involve $K_{Ax},K_{A}$ vanish because they have no overlapping support with $Q_{BC}^2$ and we obtain Eq.~\eqref{eq:case1}. 

    \item {\it (Case 2)} $B \to By$: We wish to show
    \begin{equation}\label{eq:case2}
        \langle [ K_{ABy}, Q_{BCy}^2]\rangle = \langle [ K_{AB}, Q_{BC}^2]\rangle. 
    \end{equation}
    The derivation uses $I(y:A|B)=0$ [Fact~\ref{fact:markov}]. With this Markov condition, we can make the substitution of $K_{ABy}\ket\Psi$ with 
    \begin{equation}
        K_{ABy}\ket\Psi = (K_{AB}+K_{By}-K_B)\ket\Psi
    \end{equation}
    on the left-hand side of Eq.~\eqref{eq:case2}, and obtain 
    \begin{equation}\label{eq:case2-derive-1}
        \begin{split}
            \langle [ K_{ABy}, Q_{BCy}^2]\rangle =&\langle [ K_{AB}, Q_{BCy}^2]\rangle \\ +& \langle [ K_{By}, Q_{BCy}^2]\rangle - \langle [ K_{B}, Q_{BCy}^2]\rangle \\ 
            =& \langle [ K_{AB}, Q_{BCy}^2]\rangle. 
        \end{split}
    \end{equation}
    The vanishing of the last two terms on the right-hand side of Eq.~\eqref{eq:case2-derive-1} is because of [Fact~\ref{fact:basic} (2)]. Then we can use $Q_{BCy} = Q_{B} + Q_C + Q_y$ to obtain 
    \begin{equation}\label{eq:case2-derive-2}
    \begin{split}
        \langle [ K_{AB}, Q_{BCy}^2]\rangle &=\langle [ K_{AB}, Q_{BC}^2]\rangle + 2 \langle[K_{AB},Q_B Q_y]\rangle  \\ 
        & = \langle [ K_{AB}, Q_{BC}^2]\rangle, 
    \end{split}
    \end{equation}
    where $\langle [ K_{AB}, Q_{B}Q_y]\rangle = 0$, since $I(y:A|B)=0$ with Lemma~\ref{lemma:Sigma-markov}.
    \item {\it (Case 3)} $C \to Cz$: We wish to show
    \begin{equation}\label{eq:case3}
        \langle [K_{AB},Q_{BCz}^2]\rangle =\langle [K_{AB},Q_{BC}^2]\rangle.  
    \end{equation}
    The derivation uses $I(z:A|B)=0$. To see this condition is satisfied, one can first consider a disk-like region $Z$ that contains $z$ and touches $B$ while does not touch $A$. Then $I(A:Z|B)=0$ due to [Fact~\ref{fact:markov}]. Therefore, 
    \begin{equation}
        \begin{split}
            0\leq I(z:A|B) \leq I(Z:A|B)=0,
        \end{split}
    \end{equation}
    where the two inequalities are due to strong subadditivity (SSA) \cite{Petz:2002eql}. Therefore  $I(z:A|B)=0$. With this condition and $Q_{BCz} = Q_{B}+Q_C+Q_z$, we can see 
    \begin{equation}
    \begin{split}
        \langle [K_{AB},Q_{BCz}^2]\rangle &= \langle [K_{AB},Q_{BC}^2]\rangle + 2\langle [K_{AB},Q_{B}Q_z]\rangle \\ &=\langle [K_{AB},Q_{BC}^2]\rangle,
    \end{split}
    \end{equation}
    where $\langle [K_{AB},Q_{B}Q_z]\rangle=0$ is due to $I(z:A|B)=0$ and [Lemma~\ref{lemma:Sigma-markov}]. 
\end{itemize}
Now we study the deformations that touch the triple intersection points, shown in Fig.~\ref{fig:Sigma-deform} (Right). We shall only focus on the deformations $A\to Ax'$, $B\to Bx'$ and $C\to Cz'$. There are three other possible cases in the setup of Fig.~\ref{fig:Sigma-deform} (Right), namely $A\to Az'$, $B\to By'$ and $C\to Cy'$. They are in fact the same as the case (1-3), because one can simply repeat the proof using $I(z':B|A)=0,I(y':A|B)=0,I(y':A|B)=0$ as in case 1, 2, 3 respectively. 
\begin{itemize}
    \item {\it (Case 4)} $A\to Ax'$: We wish to show 
    \begin{equation}\label{eq:case4}
        \langle [K_{ABx'},Q_{BC}^2] \rangle = \langle [K_{AB},Q_{BC}^2] \rangle. 
    \end{equation}
    We can use $I(x':C|AB)=0$ to write 
    \begin{equation}\label{eq:KABx'-decomp}
        K_{ABx'}\ket{\Psi} = (K_{AB}+K_{ABCx'}-K_{ABC})\ket{\Psi}.
    \end{equation}
    Plug this into the left-hand side of Eq.~\eqref{eq:case4}, we obtain 
    \begin{equation}\label{eq:case4-derive}
        \begin{split}
            &\langle [K_{ABx'}, Q_{BC}^2]\rangle \\ 
            =&\langle [K_{AB}, Q_{BC}^2]\rangle \\
            +&\langle [K_{ABCx'}, Q_{BC}^2]\rangle - \langle [K_{ABC}, Q_{BC}^2]\rangle \\ 
            =&\langle [K_{AB}, Q_{BC}^2]\rangle. 
        \end{split}
    \end{equation}
    The terms in the third line of Eq.~\eqref{eq:case4-derive} vanish because of Fact~\ref{fact:basic} (3).
    
    \item {\it (Case 5)} $B\to Bx'$: We wish to show 
    \begin{equation}\label{eq:case5}
        \langle [K_{ABx'},Q_{BCx'}^2]\rangle = \langle [K_{AB},Q_{BC}^2]\rangle. 
    \end{equation}
    With $Q_{BCx'} = Q_{B}+Q_C+Q_{x'}$, we obtain 
    \begin{equation}\label{eq:case5-derive}
    \begin{split}
        \langle [K_{ABx'},Q_{BCx'}^2]\rangle &= \langle [K_{ABx'},Q_{BC}^2]\rangle \\
        &+ 2\langle [K_{ABx'},Q_{B}Q_{x'}]\rangle \\ 
        & = \langle [K_{ABx'},Q_{BC}^2]\rangle,
    \end{split}
    \end{equation}
    where the second term on the right-hand side of Eq.~\eqref{eq:case5-derive} vanishes because 
    \begin{equation}
        \begin{split}
        \langle [K_{ABx'},Q_{BC}Q_{x'}]\rangle &=\langle [K_{\overline{ABx'}},Q_{B}Q_{x'}]\rangle=0.
        \end{split}
    \end{equation}
    With the result of Eq.~\eqref{eq:case5-derive}, together with the result in {\it case 4}, we derived Eq.~\eqref{eq:case5}
    
    \item {\it (Case 6)} $C\to Cz'$: Firstly, with $Q_{BCz'} = Q_{B}+Q_C + Q_{z'}$, we can obtain 
    \begin{equation}\label{eq:Sigma-AB-BCz}\begin{split}
        \langle [K_{AB},Q_{BCz}^2] \rangle =& \langle [K_{AB},Q_{BC}^2] \rangle + 2 \langle [K_{AB},Q_{B}Q_{z'}] \rangle \\ 
        =& \langle [K_{AB},Q_{BC}^2] \rangle. 
    \end{split}
    \end{equation}
    We shall show $\langle [K_{AB},Q_{B}Q_{z'}] \rangle$ in Eq.~\eqref{eq:Sigma-AB-BCz} vanishes as follows. Firstly, we can obtain 
    \begin{equation}
        \begin{split}
            \langle [K_{AB},Q_{B}Q_{z'}] \rangle = -\langle [K_{AB},Q_{A}Q_{z'}] \rangle,
        \end{split}
    \end{equation}
    because $[K_{AB},Q_A Q_{z'}]+[K_{AB},Q_{B}Q_{z'}]=[K_{AB},Q_{AB}Q_{z'}] =0$. Then, notice $I(z':B|A)=0$, which enable us to replace $K_{AB}$ in the commutator using 
    \begin{equation}
        K_{AB}\ket\Psi = (K_{ABz'}+K_{A}-K_{Az'})\ket\Psi. 
    \end{equation}
    As a result, we obtain 
    \begin{equation}\begin{split}
        -&\langle [K_{AB},Q_{A}Q_{z'}] \rangle = \langle [K_{Az'},Q_{A}Q_{z'}] \rangle \\ 
        -& \langle [K_{ABz'},Q_{A}Q_{z'}] \rangle - \langle [K_{A},Q_{A}Q_{z'}] \rangle \\ 
        =&0.
    \end{split}        
    \end{equation}
    In the equation above, $\langle [K_{Az'},Q_{A}Q_{z'}] \rangle = \langle [K_{ABz'},Q_{A}Q_{z'}] \rangle=0$ because one can flip the support of $K_{Az'}$ and $K_{ABz'}$ to their compliments so that they have no overlap with $Az'$. $\langle [K_{A},Q_{A}Q_{z'}] \rangle = 0$ is simply because $[K_A,Q_A Q_{z'}] = [K_A,Q_A]Q_{z'}=0$ with [Fact~\ref{fact:basic} (2)]. 
\end{itemize}

By now, we've considered all the possible cases of deformations of the region $ABC$ that happen at the boundary of $ABC$. One can also deform the other boundaries of $A$, $B$ and $C$ that are inside the tripartite region $ABC$. This is because one can flip the support of $K_{AB}$ or $Q_{BC}^2$ to its compliment using the pure state property and Fact~\ref{fact:basic} (4) respectively, so that the deformation is no longer inside the flipped tripartite region and therefore is within the cases (1-6) considered before.

\section{Proof of Lemma~\ref{lemma:Jt}}
\label{app:lemma-Jt}

\begin{proof}
    As we explained in the main text, to prove 
    \begin{equation}
        J(A,B,C)_{\ket\Psi} = J(A,B,C)_{\ket{\Psi(t)}},
    \end{equation}
    where $\ket{\Psi(t)} = \ins_{AB}(t)\ket\Psi$, the key is to show 
    \begin{equation}\label{eq:J-J'-app}
        J(A,B,C)_{\ket{\Psi(t)}} = J(A,B,C')_{\ket{\Psi'(t)}},
    \end{equation}
    where $\ket{\Psi'(t)} = \ins_{MD}(t)\ket\Psi$ and $C = MC'$ shown in Fig.~\ref{fig:deform}. With Eq.~\eqref{eq:J-J'-app} derived, since the reduced density matrix of $\ket{\Psi'(t)}$ on $ABC'$ is the same as the one of $\ket\Psi$, we can obtain $J(A,B,C')_{\ket{\Psi'(t)}} = J(A,B,C')_{\ket{\Psi}}$, which is equal to $J(A,B,C)_{\ket{\Psi}}$ by its deformation invariant property [Fact~\ref{fact:J-def-inv}]. 

    We now begin the proof. 

    {\it (Step 1)}: The first step is to show 
    \begin{equation}\label{eq:J-J-psi'-app}
        J(A,B,C)_{\ket{\Psi(t)}} = J(A,B,C)_{\ket{\Psi'(t)}}.
    \end{equation}
    This is a consequence of Eq.~\eqref{eq:F=F'}, which we copy here:
    \begin{equation}\label{eq:F-F'-app}
        \calF^{A,B,C}_J(x,y)_{\ket{\Psi(t)}} = \calF^{A,B,C}_J(x,y)_{\ket{\Psi'(t)}},
    \end{equation}
    where $\calF^{A,B,C}_J(x,y)$ is explicitly defined below Eq.~\eqref{eq:J-F}. The left-hand side of Eq.~\eqref{eq:F-F'-app} involves $\ins_{AB}(x)\ket{\Psi(t)}$ and $\ins_{BC}(y)\ket{\Psi(t)}$, and we wish to show that they are related to $\ins_{AB}(x)\ket{\Psi'(t)}$ and $\ins_{BC}(y)\ket{\Psi'(t)}$ by the same unitary. 

    To show this, we first make connection between $\ket{\Psi(t)} = \ins_{AB}(t)\ket\Psi$ and $\ket{\Psi'(t)} = \ins_{MD}(t)\ket{\Psi}$ via the following two Markov decomposition moves: Since $I(C:E|D)_{\ket{\Psi}}=0$ [Fact~\ref{fact:markov} and Lemma~\ref{lemma:Markov-flow}], one can obtain
    \begin{equation}\label{eq:IAB-decompose}\begin{split}
        \ins_{AB}(t)\ket\Psi &= \ins_{CDE}(t)\ket\Psi \\ 
        &=\ins_{D}(-t)\ins_{DE}(t)\ins_{CD}(t)\ket{\Psi}
    \end{split}    
    \end{equation}
    Furthermore, for $\ins_{CD}(t)\ket{\Psi}$ in the result above, we can do another Markov decomposition move because of $I(D:C'|M)_{\ket\Psi}=0$:
    \begin{equation}
        \begin{split}
            \ins_{CD}(t)\ket{\Psi} &= \ins_{M}(-t)\ins_{C}(t)\ins_{MD}(t)\ket{\Psi} \\ 
            &= \ins_{M}(-t)\ins_{C}(t)\ket{\Psi'(t)}
        \end{split}
    \end{equation} 
    Combining the results of the two Markov decomposition moves, we obtain 
    \begin{equation}
        \label{eq:Psi-t-deform}
        \ket{\Psi(t)} = \ins_{D}(-t)\ins_{DE}(t) \ins_{M}(-t)\ins_{C}(t)\ket{\Psi'(t)}.
    \end{equation}

    With Eq.~\eqref{eq:Psi-t-deform} plugged in $\ins_{AB}(x)\ket{\Psi(t)}$ and $\ins_{BC}(y)\ket{\Psi(t)}$, we can obtain
    \begin{equation}\label{eq:IABx}
    \begin{split}
        &\ins_{AB}(x)\ket{\Psi(t)}  \\ 
        =& \ins_{AB}(x)\ins_{DE}(t) \ins_{D}(-t)\ins_{C}(t)\ins_{M}(-t)\ket{\Psi'(t)} \\ 
        =& \ins_{D}(-t) \ins_{DE}(t)\ins_{M}(-t)\ins_{C}(t)\ins_{AB}(x)\ket{\Psi'(t)}
    \end{split}
    \end{equation}
    and 
    \begin{equation}\label{eq:IBCy}
        \begin{split}
            &\ins_{BC}(y)\ket{\Psi(t)} \\ =&  \ins_{BC}(y)\ins_{DE}(t) \ins_{D}(-t)\ins_{C}(t)\ins_{M}(-t)\ket{\Psi'(t)} \\ 
            =&\ins_{D}(-t) \ins_{DE}(t)\ins_{M}(-t)\ins_{C}(t)\ins_{BC}(y)\ket{\Psi'(t)}. 
    \end{split}
    \end{equation}
    In the derivations above, we utilized the commutation move that we introduced in Sec.~\ref{sec:IMF}. 

    Moreover, we can show that the instantaneous modular flow sequence $\ins_{DE}(t) \ins_{D}(-t)\ins_{C}(t)\ins_{M}(-t)$ in both Eq.~\eqref{eq:IABx} and Eq.~\eqref{eq:IBCy} can be replaced by the same unitary operator $V(t)$, so that
    \begin{align}
        \ins_{AB}(x)\ket{\Psi(t)} &= V(t)\ins_{AB}(x)\ket{\Psi'(t)} \label{eq:IABx-Vt}\\ 
        \ins_{BC}(y)\ket{\Psi(t)} &= V(t)\ins_{BC}(y)\ket{\Psi'(t)}. 
    \end{align}
    The unitary $V(t)$ is defined as 
    \begin{equation}\label{eq:def-Vt}
        V(t) \equiv (\rho'_{D})^{-\ii t} (\rho'_{DE})^{\ii t} (\rho'_{M})^{-\ii t} (\rho'_{C})^{\ii t},
    \end{equation}
    where $\rho'_{\bullet}$ denotes the reduced density matrix from $\ket{\Psi'(t)}$, and the $t$ parameter is omitted in the notation to avoid clutter. Such a replacement works because in the last lines of both Eq.~\eqref{eq:IABx} and Eq.~\eqref{eq:IBCy}, the instantaneous modular flows preceding $\ins_{X}$ ($X = D,DE,M,C$) do not change the reduced density matrix on $X$. This is because they have supports that either contain $X$ or are completely outside $X$. We remark that if $\rho'_{X}$ is non-invertible, to make sure $V(t)$ is a unitary, in Eq.~\eqref{eq:def-Vt} one should use the unitary extension $(\Tilde{\rho}'_{X})^{\pm \ii t}$ defined in App.~\ref{app:non-invertible}. 
  
    In the end, when we take state overlap between $\ins_{AB}(x)\ket{\Psi(t)}$ and $\ins_{BC}(y)\ket{\Psi(t)}$ to compute $\calF_J^{A,B,C}$, the unitary $V(t)$ gets cancelled, and we obtain Eq.~\eqref{eq:F-F'-app}. With Eq.~\eqref{eq:J-F}, we can conclude Eq.~\eqref{eq:J-J-psi'-app}. 

    {\it (Step 2)}: The second step is to show 
    \begin{equation}\label{eq:J-deform-app}
        J(A,B,C)_{\ket{\Psi'(t)}} = J(A,B,C')_{\ket{\Psi'(t)}}. 
    \end{equation}
    The derivation is crucially based on Markov properties of $\ket{\Psi'(t)}$ and the resulting consequence Eq.~\eqref{eq:K-markov-decomp-app} and Fact~\ref{fact:J-markov}. In the following computations, we shall use $K'_{\bullet}$ to denote the modular Hamiltonian obtained from $\ket{\Psi'(t)}$. 

    \begin{figure}[!h]
        \centering
        \includegraphics[width=0.5\linewidth]{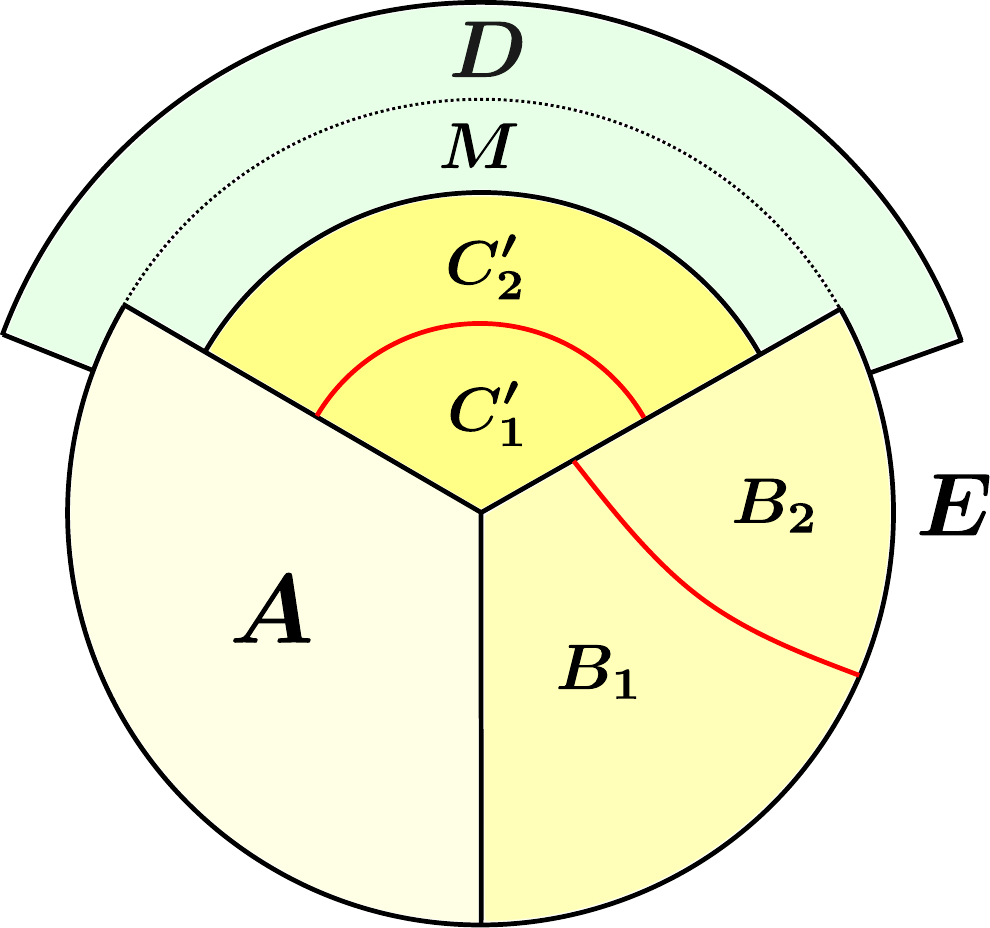}
        \caption{Region partitions $B = B_1B_2,C' = C'_1C'_2$, $C = C'M$.}
        \label{fig:reg-partition-app}
    \end{figure}
    We further make the partitions $B = B_1B_2$ and $C' = C'_1C'_2$ shown in Fig.~\ref{fig:reg-partition-app}, $\ket{\Psi'(t)}$ satisfies 
    \begin{equation}\label{eq:markovs-BC-app}
        I(B_1C'_1:M|B_2C'_2)_{\ket{\Psi'(t)}}=0. 
    \end{equation}
    This can be proved using strong subadditivity (SSA) \cite{Lieb:1973cp}. 
    \begin{equation}\label{eq:using-SSA-app}
        \begin{split}
            0 &\leq I(B_1C'_1:M|B_2C'_2)_{\ket{\Psi'(t)}} \\
            &\leq I(B_1C'_1:MD|B_2C'_2)_{\ket{\Psi'(t)}} \\ 
            & = I(B_1C'_1:MD|B_2C'_2)_{\ket{\Psi}} \\ 
            & = 0.
        \end{split}
    \end{equation}
    In Eq.~\eqref{eq:using-SSA-app}, the first and second inequality are due to SSA, and the equality in the third line is because the $\ket{\Psi'(t)} = \ins_{MD}(t)\ket{\Psi}$ and $\ket{\Psi}$ have the same entanglement entropies on $BC',B_2C'_2MD,B_2C'_2,BCD$, as the modular flow $\ins_{MD}$, which is just a unitary on $MD$, doesn't change the spectrum of the reduced density matrices on these regions. With Eq.~\eqref{eq:markovs-BC-app}, one can decompose 
    \begin{equation}
        K'_{BC}\ket{\Psi'(t)} = (K'_{BC'} + K'_{B_2C'_2M} - K'_{B_2C'_2})\ket{\Psi'(t)}.
     \end{equation}
     Plug this decomposition into the left-hand side of Eq.~\eqref{eq:J-deform-app}, we obtain 
     \begin{equation}\label{eq:J-deform-dev-app}
        \begin{split}
            &J(A,B,C)_{\ket{\Psi'(t)}} = J(A,B,C')_{\ket{\Psi'(t)}} \\
            +&\ii \bra{\Psi'(t)}[K'_{AB},K'_{B_2C'_2M}-K'_{B_2C'_2}]\ket{\Psi'(t)}  \\
            =& J(A,B,C')_{\ket{\Psi'(t)}}.
        \end{split}
     \end{equation}
    The vanishing of the second term in the right-hand side of Eq.~\eqref{eq:J-deform-dev-app} can be shown using the similar method: Notice $I(A:B_2|B_1)_{\ket{\Psi'(t)}} = I(A:B_2|B_1)_{\ket{\Psi}}=0$, we can decompose 
    \begin{equation}\label{eq:decomp-KAB}
        K'_{AB}\ket{\Psi'(t)} = (K'_{AB_1}+K'_{B}-K'_{B_1})\ket{\Psi'(t)}. 
    \end{equation}
    With this decomposition, the second line of Eq.~\eqref{eq:J-deform-dev-app} becomes 
    \begin{equation}
        \begin{split}
            &\ii \bra{\Psi'(t)}[K'_{AB},K'_{B_2C'_2M}-K'_{B_2C'_2}]\ket{\Psi'(t)} \\ 
            =&\ii \bra{\Psi'(t)}[K'_{B},K'_{B_2C'_2M}-K'_{B_2C'_2}]\ket{\Psi'(t)} \\ 
            =& J(B_1,B_2,C'_2M)_{\ket{\Psi'(t)}} - J(B_1,B_2,C'_2)_{\ket{\Psi'(t)}}=0, 
        \end{split}
    \end{equation} 
    where in the last line both modular commutators vanish because the Markov property 
    $I(B_1:C'_2M|B_2)_{\ket{\Psi'(t)}}=0,I(B_1:C'_2|B_2)_{\ket{\Psi'(t)}}=0$ [Fact~\ref{fact:J-markov}]. The first Markov condition can be proved using SSA in the same way as Eq.~\eqref{eq:using-SSA-app}: 
    \begin{equation}
        \begin{split}
            0\leq& I(B_1:C'_2M|B_2)_{\ket{\Psi'(t)}}  \\ 
            \leq& I(B_1:C'_2MD|B_2)_{\ket{\Psi'(t)}} \\ 
            =& I(B_1:C'_2MD|B_2)_{\ket{\Psi}} = 0,
        \end{split}
    \end{equation}
    and the second Markov condition holds simply because the reduced density matrix of $\ket{\Psi'(t)}$ on $AB$ is the same as the one from $\ket{\Psi}$. By now we finished derivation of Eq.~\eqref{eq:J-deform-app}.

    On $ABC'$, the reduced density matrix from $\ket{\Psi'(t)}$ is the same as the one from $\ket{\Psi}$. Therefore 
    \begin{equation}\begin{split}
    J(A,B,C')_{\ket{\Psi'(t)}} &= J(A,B,C')_{\ket{\Psi}}  \\ 
    &= J(A,B,C)_{\ket{\Psi}} \\ 
    &= \frac{\pi c_{-}}{3} ~,
\end{split}
\end{equation}
where in the second line, we used Fact~\ref{fact:J-def-inv}. 
\end{proof} 

{\it Remark:} We summarize the proof as follows: Step 1 is crucially based on the Markov decomposition move, which reflects {\bf P1} mentioned in the main text. In step 2, to deform the support of the modular commutator, we explicitly showed $\ket{\Psi'(t)}$ still satisfies several Markov conditions, which reflects a special case of {\bf P2}, i.e. those that directly follow from SSA. We shall provide a full proof of {\bf P2} in \cite{IMF-bulk}.

\section{Proof of Theorem~\ref{thm:no-go-electric-intro}} 
\label{app:proof-no-go-electric}
Before proving Theorem~\ref{thm:no-go-electric-intro}, we first prove the following lemma: 
\begin{lemma}\label{lemma:U-flow}
    Let $\ket{\psi}$ be a quantum state in a tensor product Hilbert space $\calH = \otimes_v \calH_v$. Any on-site unitary $U = \prod_v U_v$ can be moved freely in a sequence of instantaneous modular flows from $\ket{\psi}$: 
    \begin{equation}
    \begin{split}
        &U \ins_{A_n}(t_n) \cdots \ins_{A_1}(t_1) \ket{\psi}  \\ 
        =& \ins_{A_n}(t_n)\cdots \ins_{A_i}(t_i) U \ins_{A_{i-1}}(t_{i-1}) \cdots \ins_{A_1}(t_1) \ket{\psi} \\
        =& \ins_{A_n}(t_n) \cdots \ins_{A_1}(t_1) U\ket{\psi},\quad \forall i,~2<i<n. 
    \end{split}
    \end{equation}
\end{lemma}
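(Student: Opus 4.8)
The plan is to reduce the whole statement to one elementary fact and then run a short induction. The elementary fact is: for \emph{any} pure state $\ket\phi$ in the tensor product Hilbert space, any region $X$, and any on-site unitary $U=\prod_v U_v$,
\begin{equation}
  U\,\ins_X(t)\ket\phi = \ins_X(t)\,U\ket\phi,
\end{equation}
where, as always, the instantaneous modular flow on the right is computed from \emph{its} current state $U\ket\phi$. Granting this, the lemma follows: writing $\ket{\psi_0}=\ket\psi$ and $\ket{\psi_k}=\ins_{A_k}(t_k)\ket{\psi_{k-1}}$, so that the object of interest is $U\ket{\psi_n}$, one applies the elementary fact with $\ket\phi=\ket{\psi_{k-1}}$ and $X=A_k$ to get $U\ket{\psi_k}=\ins_{A_k}(t_k)\,U\ket{\psi_{k-1}}$, and substitutes this successively for $k=n,n-1,\dots$. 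The stage reached after peeling off $n-i+1$ flows is $U\ket{\psi_n}=\ins_{A_n}(t_n)\cdots\ins_{A_i}(t_i)\,U\ket{\psi_{i-1}}$, and peeling off all $n$ gives $U\ket{\psi_n}=\ins_{A_n}(t_n)\cdots\ins_{A_1}(t_1)\,U\ket\psi$; these are precisely the equalities claimed (the bound $2<i<n$ in the statement merely indicates that a generic intermediate position is reached, not only the two endpoints).

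To prove the elementary fact I would split $U=U_X U_{\overline X}$ with $U_X\equiv\prod_{v\in X}U_v$; this is legitimate because the two products have disjoint supports (in the fermionic case the $U_v$ are even operators, so the $\mathbb{Z}_2$ grading produces no sign). Let $\tau_X=\Tr_{\overline X}\ket\phi\bra\phi$. The reduced density matrix of the current state $U\ket\phi$ on $X$ is $\Tr_{\overline X}(U\ket\phi\bra\phi U^\dagger)=U_X\,\tau_X\,U_X^\dagger$, since tracing out $\overline X$ only sees $U_{\overline X}$ through $U_{\overline X}^\dagger U_{\overline X}=\one$. Conjugation by the unitary $U_X$ commutes with the spectral calculus, so $(U_X\tau_X U_X^\dagger)^{\ii t}=U_X\,\tau_X^{\ii t}\,U_X^\dagger$; this holds verbatim in the non-invertible case under the convention of App.~\ref{app:non-invertible}, as $U_X$ merely rotates the eigenbasis of $\tau_X$ while fixing its eigenvalues and its support projector. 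Hence
\begin{equation}
\begin{split}
  \ins_X(t)\,U\ket\phi &= U_X\,\tau_X^{\ii t}\,U_X^\dagger\,U\ket\phi = U_X\,\tau_X^{\ii t}\,U_{\overline X}\ket\phi \\
  &= U_X U_{\overline X}\,\tau_X^{\ii t}\ket\phi = U\,\ins_X(t)\ket\phi,
\end{split}
\end{equation}
using $U_X^\dagger U=U_{\overline X}$ and then the fact that $\tau_X^{\ii t}$ is supported on $X$, hence commutes with $U_{\overline X}$.

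The only point deserving care — and thus the closest thing to an obstacle — is the behaviour when $\tau_X$ is non-invertible, so that $\tau_X^{\ii t}$ is not unitary and generalized inverses enter. Here one leans on App.~\ref{app:non-invertible}: since $\ket\phi$ purifies $\tau_X$, one has $\Pi(\tau_X)\ket\phi=\ket\phi$, see Eq.~\eqref{eq:Pi-action-1}, which guarantees that substituting $U_X\tau_X U_X^\dagger$ for the reduced density matrix and expanding $(U_X\tau_X U_X^\dagger)^{\ii t}=U_X\tau_X^{\ii t}U_X^\dagger$ never produces a spurious support projector when the operators act on the relevant states. Everything else is bookkeeping; the tensor-product structure enters only through the clean factorization $U=U_XU_{\overline X}$, which is why the lemma — like the no-go theorems — carries over unchanged to the $\mathbb{Z}_2$-graded (fermionic) setting.
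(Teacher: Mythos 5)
Your proof is correct and follows the same strategy as the paper's: reduce everything to the statement that $U$ commutes with a single instantaneous modular flow, using the factorization $U=U_XU_{\overline X}$, and then iterate through the sequence. The only difference is in how that single-flow commutation is verified --- the paper applies the flipping move $\ins_X(t)\ket\phi=\ins_{\overline X}(t)\ket\phi$ twice so as never to conjugate a density matrix, whereas you compute the transformed reduced density matrix $U_X\,\tau_X\,U_X^\dagger$ of the current state directly and invoke the spectral calculus; both are valid, and your version has the minor merit of treating the non-invertible case explicitly.
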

\begin{proof}
    The proof only utilizes the on-site property of $U$ and the flipping move introduced in Sec.~\ref{sec:IMF}. 
    To prove lemma~\ref{lemma:U-flow}, we only need to show $U$ commutes with any $\ins_{A_i}(t_i)$ in the sequence. Let $\ket{\psi_i}$ denotes the current state for $\ins_{A_i}(t_i), \forall i,~1\leq i\leq n$. 
    \begin{equation}
        \begin{split}
            U\ins_{A_i}(t_i)\ket{\psi_i} &= U_{A_i} U_{\overline{A_i}}\ins_{A_i}(t_i)\ket{\psi_i} \\ 
            &=U_{A_i} \ins_{A_i}(t_i)U_{\overline{A_i}}\ket{\psi_i} \\ 
            &=U_{A_i} \ins_{\overline{A_i}}(t_i)U_{\overline{A_i}}\ket{\psi_i} \\
            &=\ins_{\overline{A_i}}(t_i)U_{A_i}U_{\overline{A_i}}\ket{\psi_i} \\
            & = \ins_{A_i}(t_i) U\ket{\psi_i}, 
        \end{split}
    \end{equation}
    where $U_{A_i} \equiv \prod_{v\in A_i} U_v$. 
\end{proof}
We remark on this lemma as follows: Firstly, this result illustrates the advantage of instantaneous modular flows interplaying with unitaries. Secondly, one can conclude that if $U$ is a symmetry of the state $\ket{\psi}$, then $U$ is also a symmetry for the state evolved by a sequence of instantaneous modular flows from $\ket\psi$.

We now prove Theorem~\ref{thm:no-go-electric-intro}. 
\begin{proof}
Consider a $\gU(1)$ symmetric state $\ket{\Psi}$ on a two-dimensional lattice in a tensor product Hilbert space with finite local dimension as described in the main text. (1) On one hand, because the local Hilbert space $\calH_X$ of any region $X$ with a finite number of sites is finite dimensional, the local charge operator $Q_X$ should have a bounded spectrum and $\langle Q_X^2\rangle$ should be finite. (2) On the other hand, with the conditions that $\ket{\Psi}$ satisfies bulk {\bf A1} and have non-zero $\sigma_{xy}$ from $\Sigma(A,B,C)$ on a tripartite region $ABC$ shown in Fig.~\ref{fig:bulk-A1-J} (Right), we can show that 
\begin{align}
    \langle Q_{BC}^2 \rangle (t) &\equiv \bra{\Psi(t)}Q_{BC}^2 \ket{\Psi(t)} \\ 
    &=\langle Q_{BC}^2 \rangle + 2\sigma_{xy} t,\quad t\in\mathbb{R} \label{eq:QBC-t}.
\end{align} 
where $\ket{\Psi(t)} = \ins_{AB}(t)\ket{\Psi}$. This implies $\langle Q_{BC}^2 \rangle (t)$ as a function of $t$ is unbounded in both directions. Since $Q_{BC}^2$ is a positive operator and $\langle Q_{BC}^2 \rangle (t)\geq 0$, we can conclude $\langle Q_{BC}^2 \rangle$ must be infinite and hence the spectrum of $Q_{BC}$ is unbounded, where $BC$ contains only finite number of sites. Therefore, we have obtained a contradiction.

The main goal of the derivation below is to show 
\begin{equation}\label{eq:Sig-Sig-t}
    \Sigma(A,B,C)_{\ket{\Psi(t)}} = \Sigma(A,B,C)_{\ket{\Psi}} = \sigma_{xy},
\end{equation}
with which one can easily solve the differential equation Eq.~\eqref{eq:dQ-t-app} 
\begin{equation}\label{eq:dQ-t-app}
    \frac{d}{dt} \langle Q_{BC}^2 \rangle (t) = 2\Sigma(A,B,C)_{\ket{\Psi(t)}},\quad \forall t\in\mathbb{R}.
\end{equation}
and obtain Eq.~\eqref{eq:QBC-t}.

The route of the derivation of Eq.~\eqref{eq:Sig-Sig-t} is quite similar to the derivation of Lemma~\ref{lemma:Jt} in App.~\ref{app:lemma-Jt}. 

{\it (Step 1):} Consider Fig.~\ref{fig:deform}. The first step is to show 
\begin{equation}\label{eq:Sigma-Sigma'}
    \Sigma(A,B,C)_{\ket{\Psi(t)}} = \Sigma(A,B,C)_{\ket{\Psi'(t)}},
\end{equation}
where $\ket{\Psi'(t)} = \ins_{MD}(t)\ket\Psi$. With Eq.~\eqref{eq:Sigma-F}, to show Eq.~\eqref{eq:Sigma-Sigma'}, we can actually show a stronger result 
\begin{equation}\label{eq:F-F'-sigma}
    \calF^{A,B,C}_{\Sigma}(x,y)_{\ket{\Psi(t)}} = \calF^{A,B,C}_{\Sigma}(x,y)_{\ket{\Psi'(t)}}.
\end{equation}
The left hand side of Eq.~\eqref{eq:F-F'-sigma} involves $\ins_{AB}(x)\ket{\Psi(t)}$ and $U_{BC}(y)\ket{\Psi(t)}$. We shall relate them to $\ins_{AB}(x)\ket{\Psi'(t)}$ and $U_{BC}(y)\ket{\Psi'(t)}$ by the same unitary $V(t)$ defined in Eq.~\eqref{eq:def-Vt}. 

For $\ins_{AB}(x)\ket{\Psi(t)}$, we can simply use the result Eq.~\eqref{eq:IABx-Vt}. For $U_{BC}(y)\ket{\Psi(t)}$, 
\begin{equation}\label{eq:UBCy-Vt}
    \begin{split}
        &U_{BC}(y)\ket{\Psi(t)} \\ 
        =& U_{BC}(y)\ins_{D}(-t) \ins_{DE}(t)\ins_{M}(-t)\ins_{C}(t)\ket{\Psi'(t)} \\
        =& \ins_{D}(-t) \ins_{DE}(t)\ins_{M}(-t)\ins_{C}(t)U_{BC}(y)\ket{\Psi'(t)}\\ 
        =& V(t)U_{BC}(y)\ket{\Psi'(t)}. 
    \end{split}
\end{equation}
The explanation of Eq.~\eqref{eq:UBCy-Vt} is the following: The first line is simply a substitution of $\ket{\Psi(t)}$ using Eq.~\eqref{eq:IAB-decompose}. From the second line to the third line, we utilized Lemma~\ref{lemma:U-flow}. In the last line, to see one can replace the instantaneous modular flows with $V(t)$ defined in Eq.~\eqref{eq:def-Vt}, notice $U_{BC}(y)$ doesn't change any of the reduced density matrices on $D,DE,M,C$, and then one can simply repeat the argument given in the paragraph above Eq.~\eqref{eq:def-Vt}. We remark again that if any reduced density matrix is non-invertible, one can simply use the unitary extension defined in App.~\ref{app:non-invertible} in $V(t)$.

Therefore, we can see Eq.~\eqref{eq:F-F'-sigma} holds as the unitary $V(t)$ is cancelled when we take state overlap between $\ins_{AB}(x)\ket{\Psi(t)}$ and $U_{BC}(y)\ket{\Psi(t)}$.

{\it (Step 2)}: The next step is to show 
\begin{equation}\label{eq:step2}
    \Sigma(A,B,C)_{\ket{\Psi'(t)}} = \Sigma(A,B,C')_{\ket{\Psi'(t)}}.
\end{equation}
With $Q_{BC} = Q_{B} + Q_{C'} + Q_{M}$, we obtain 
\begin{equation}\label{eq:Sigma-ABC'}
    \begin{split}
    \Sigma(A,B,C)_{\ket{\Psi'(t)}} &= \Sigma(A,B,C')_{\ket{\Psi'(t)}} \\ 
    &+ \ii \langle \Psi'(t)| [K'_{AB},Q_{B}Q_{M}] | \Psi'(t) \rangle \\ 
    & =  \Sigma(A,B,C')_{\ket{\Psi'(t)}}. 
\end{split}
\end{equation}
Now we shall show the second term in the right-hand of Eq.~\eqref{eq:Sigma-ABC'} vanishes. As a starter, we emphasize that as a consequence of Lemma~\ref{lemma:U-flow} $\ket{\Psi'(t)}$ is also symmetric under the same $\gU(1)$ operator generated by $Q = \sum_v Q_v$ as the one for $\ket{\Psi}$, therefore both Fact~\ref{fact:basic} and Lemma~\ref{lemma:Sigma-markov} are applicable for $\ket{\Psi'(t)}$. Firstly, we can use Fact~\ref{fact:basic} (2) to conclude $[K'_{AB},Q_{AB}Q_M]=0$. With $Q_{B} = Q_{AB}-Q_A$ and $[K'_{AB},Q_{AB}]=0$, we can write the second term in Eq.~\eqref{eq:Sigma-ABC'} as 
\begin{equation}\label{eq:Sigma-vanish}
    \begin{split}
        &\ii \langle \Psi'(t)| [K'_{AB},Q_{B}Q_{M}] | \Psi'(t) \rangle \\ 
        =&\ii  \langle \Psi'(t)| [K'_{AB},(Q_{AB}-Q_A)Q_{M}] | \Psi'(t) \rangle \\
        =& -\ii \langle \Psi'(t)| [K'_{AB},Q_{A}Q_{M}] | \Psi'(t) \rangle.  
    \end{split}
\end{equation}
Then, with the same partition $B = B_1B_2$ as Fig.~\ref{fig:reg-partition-app}, one can decompose $K'_{AB}\ket{\Psi'(t)}$ as in Eq.~\eqref{eq:decomp-KAB}. Plug it into Eq.~\eqref{eq:Sigma-vanish}, we can see that
\begin{equation}
    \begin{split}
        &\ii \langle \Psi'(t)| [K'_{AB},Q_{A}Q_{M}] | \Psi'(t) \rangle \\ 
        =& \ii \langle \Psi'(t)| [K'_{AB_1}+K'_B-K'_{B_1},Q_{A}Q_{M}] | \Psi'(t) \rangle \\ 
        =& \ii \langle \Psi'(t)| [K'_{AB_1},Q_{A}Q_{M}] | \Psi'(t) \rangle \\ 
        =&0,
    \end{split}
\end{equation}
where the last line is because of $I(M:B_1|A)_{\ket{\Psi'(t)}}=0$ [Lemma~\ref{lemma:Sigma-markov}]. This Markov condition can be shown using SSA in the same way as the derivation of Eq.~\eqref{eq:using-SSA-app}: 
\begin{equation}
    \begin{split}
        0&\leq I(M:B_1|A)_{\ket{\Psi'(t)}} \\ 
        &\leq I(MD:B_1|A)_{\ket{\Psi'(t)}} \\ 
        & = I(MD:B_1|A)_{\ket{\Psi}} = 0. 
    \end{split}
\end{equation}
By now, we have achieved the goal for the second step [Eq.~\eqref{eq:step2}].

Since the reduced density matrix for computing $\Sigma(A,B,C')_{\ket{\Psi'(t)}}$ is on $ABC'$ and the same as the one from $\ket{\Psi}$, we obtain 
\begin{equation}
    \begin{split}     \Sigma(A,B,C')_{\ket{\Psi'(t)}} &= \Sigma(A,B,C')_{\ket{\Psi}} \\ 
        &=\Sigma(A,B,C)_{\ket{\Psi}} \\ 
        & = \sigma_{xy},
    \end{split}
\end{equation}
where the deformation $C'\to C$ in the second line is due to the topological property [Lemma~\ref{lemma:topology-Sigma}]. Thus, we obtained Eq.~\eqref{eq:Sig-Sig-t} and finished the proof. 
\end{proof}

\section{Generalizations of the no-go theorems}
\label{app:general-thm} 

The no-go theorems~\ref{thm:no-go-thermal-intro} and~\ref{thm:no-go-electric-intro}  
both involve the finite local Hilbert space dimensions condition. This condition can be relaxed into  
the condition that $\ket{\Psi}$ has finite local entanglement entropies and finite local charge fluctuations (i.e. $(S_A)_{\ket{\Psi}}<\infty$ and $\langle \Psi | Q_A^2|\Psi \rangle - \langle \Psi | Q_A|\Psi \rangle^2 < \infty$ for any local region $A$).\footnote{We left it a question on if the finiteness of entropy implies the finiteness of charge fluctuation.}

{\it Preliminaries:} We shall consider a state $\ket{\Psi}$ with well-defined local reduced density matrices $\rho_A$ and modular Hamiltonians $K_A = -\ln\rho_A$, and their von Neumann entropies are finite $S(\rho_A)<\infty$, where $A$ stands for any regions with finite number of lattice sites. First of all, one can straightforwardly define instantaneous modular flows in the same manner as in the main text. We shall assume that the modular flow gives physical quantities as a smooth function of $t$. Moreover, the implications of the bulk {\bf A1} condition still holds, which we explicitly explain in the following:
\begin{itemize}
    \item Fact~\ref{fact:markov} still holds, since its proof \cite{shi2020fusion} only uses strong subadditivity of entanglement entropies, which is true regardless of the local Hilbert space dimension \cite{Lieb:1973cp}.
    \item Lemma~\ref{lemma:Markov-flow} is still valid and therefore, one can still do Markov decomposition move. This is because if $\ket{\Psi}$ has finite local entanglement entropies, even if it is in an infinite dimensional Hilbert space, the structure theorem Eq.~\eqref{eq:rho-ABC-dec} of a Markov state $\rho_{ABC}$ from $\ket{\Psi}$ still holds \cite{Jen2006}, and proof of Lemma~\ref{lemma:Markov-flow} given in App.~\ref{app:markov-flow} is still applicable.
    \item One can still decompose the modular Hamiltonian of a Markov state, namely if $I(A:C|B)_{\ket{\Psi}} =0 $, then 
    \begin{equation}\label{eq:KABC-dec-app}
        K_{ABC}\ket{\Psi} = (K_{AB}+K_{BC}-K_{B})\ket{\Psi},
    \end{equation}
    where $K_{\bullet}$ are modular Hamiltonians of $\rho_{\bullet}$ from $\ket{\Psi}$. This equation can be obtained by taking the first derivative with respect to $t$ at $t =0$ for the Markov flow equation $\ins_{ABC}(t)\ket{\Psi} = \ins_{AB}(t)\ins_{BC}(t)\ins_B(-t)\ket{\Psi},\forall t\in\mathbb{R}$. 
\end{itemize}
In summary, if $\ket{\Psi}$ satisfies bulk {\bf A1} and has finite local entanglement entropies, one can still do Markov decomposition move and decompose the modular Hamiltonian as in Eq.~\eqref{eq:KABC-dec-app} for a Markov state. 

Now we discuss the modified no-go theorems: 
\begin{theorem}\label{thm:no-go-thermal-general}
    There is no quantum many-body state $\ket{\Psi}$ in a tensor product Hilbert space with finite local entanglement entropies satisfying bulk {\bf A1} with non-zero $c_{-}$ from $J(A,B,C)_{\ket\Psi}$ using Eq.~\eqref{eq:J-c}. 
\end{theorem}
\begin{theorem}\label{thm:no-go-electric-general}
    There is no quantum many-body state $\ket{\Psi}$ with on-site $\gU(1)$ symmetry in a tensor product Hilbert space with finite local entanglement entropies and finite local charge fluctuations satisfying bulk {\bf A1} with non-zero $\sigma_{xy}$ from $\Sigma(A,B,C)_{\ket\Psi}$ using Eq.~\eqref{eq:Sigma-sigma}. 
\end{theorem}

The proofs of Theorem~\ref{thm:no-go-thermal-general} and Theorem~\ref{thm:no-go-electric-general} are vastly overlapped with the proofs of Theorem~\ref{thm:no-go-thermal-intro} and Theorem~\ref{thm:no-go-electric-intro}. Below we explain how the original proofs can be applied in this context: 

{\it Proof sketch of Theorem~\ref{thm:no-go-thermal-general}:} The proof is still by contradiction. Let $\ket{\Psi}$ be a state with all the properties in Theorem~\ref{thm:no-go-thermal-general}. 

(1) Firstly, Lemma~\ref{lemma:Jt} is still valid and the proof is the same as the one in App.~\ref{app:lemma-Jt}: The step 1 in the proof of Lemma~\ref{lemma:Jt} can be repeated because one can still apply Markov decomposition move, which is the only non-trivial ingredient. For step 2, the main ingredients are strong subadditivity and decomposition of modular Hamiltonian of a Markov state Eq.~\eqref{eq:KABC-dec-app}, which are also applicable as we discussed above, therefore the step 2 in the proof of Lemma~\ref{lemma:Jt} is also applicable. 

(2) Once Lemma~\ref{lemma:Jt} is proved, then one can repeat the same derivation mentioned in the main body of the proof of Theorem~\ref{thm:no-go-thermal-intro} in the main text, and conclude 
\begin{equation}
    (S_{BC})_{\ket{\Psi(t)}} = (S_{BC})_{\ket{\Psi}} + \frac{\pi c_{-}}{3}t,\quad \forall t\in\mathbb{R}.
\end{equation}
Since $(S_{BC})_{\ket{\Psi(t)}}\geq 0,\forall t\in\mathbb{R}$ and one can take $\pi c_{-}t/3\to -\infty$ by letting $t \to -\infty$ (or $+\infty$) if $c_{-}>0$ (or $<0$), $(S_{BC})_{\ket{\Psi}}$ must be infinite, which contradicts to the finite local entropies assumption of $\ket{\Psi}$. 

{\it Proof sketch of Theorem~\ref{thm:no-go-electric-general}:} The proof in App.~\ref{app:proof-no-go-electric} can be repeated for the same reasons given above, as it uses the same ingredients (Markov decomposition move and Eq.~\eqref{eq:KABC-dec-app} for Markov state). Similar to the proof above, once one repeats the derivations and obtains 
\begin{equation}
    \Sigma(A,B,C)_{\ket{\Psi(t)}} = \sigma_{xy},\quad \forall t \in\mathbb{R},
\end{equation}
where $\ket{\Psi(t)} = \ins_{AB}(t)\ket{\Psi}$. Moreover, one can show that the charge fluctuation on $\ket{\Psi(t)}$ on $BC$, defined as 
\begin{equation}
    \sigma_{BC}(t) \equiv \langle\Psi(t)|Q_{BC}^2|\Psi(t)\rangle - \langle\Psi(t)|Q_{BC}|\Psi(t)\rangle^2
\end{equation}
 satisfies 
 \begin{equation}
 \begin{aligned}
      &\frac{d \sigma_{BC}(t)}{dt} = 2\Sigma(A,B,C)_{\ket{\Psi(t)}} \\ 
      +& 2\ii \langle \Psi(t)|Q_{BC} |\Psi(t)\rangle \cdot \langle \Psi(t)|[K_{AB}(t),Q_{BC}] |\Psi(t)\rangle \\ 
       = &2\Sigma(A,B,C)_{\ket{\Psi(t)}},
 \end{aligned}
 \end{equation}
 where $K_{AB}(t)$ is the modular Hamitonian from $\ket{\Psi(t)}$ (which is equal to $K_{AB}$) and the term in the second line vanishes because 
 \begin{equation}
     \begin{split}
        & \langle \Psi(t)|[K_{AB}(t),Q_{BC}] |\Psi(t)\rangle  \\ 
        =& \langle \Psi(t)|[K_{AB}(t),Q_{B}] |\Psi(t)\rangle + \langle \Psi(t)|[K_{AB}(t),Q_{C}] |\Psi(t)\rangle \\ 
        =& \langle \Psi(t)|[K_{\overline{AB}}(t),Q_{B}] |\Psi(t)\rangle \\ 
        =&0.
     \end{split}
 \end{equation}
Therefore, we obtain 
\begin{equation}
    \sigma_{BC}(t) = \sigma_{BC} + 2\sigma_{xy}t,\quad \forall t\in\mathbb{R},
\end{equation}
where $\sigma_{BC}$ is the charge fluctuation computed on $\ket{\Psi}$. Since $\sigma_{BC}(t) \geq 0$ and one can take $2\sigma_{xy}t\to -\infty$ if $\sigma_{xy}\neq 0$, then $\sigma_{BC}$ must be infinite, which contradicts to the finite charge fluctuations condition.

\phantomsection\addcontentsline{toc}{section}{References}
\bibliographystyle{ucsd}
\bibliography{ref} 
\end{document}